\documentclass{article}

\usepackage{etoolbox} %
\usepackage{tikz}
\usetikzlibrary{matrix}
\usetikzlibrary{external}
\usetikzlibrary{arrows.meta}

\usepackage{caption}
\usepackage{subcaption}

\providebool{ignoretikz}
\setbool{ignoretikz}{false}

\providecommand{\providesetbool}[2]{%
  \ifcsundef{if#1}{%
    \newbool{#1}%
    \setbool{#1}{#2}%
  }{}%
}

\providesetbool{externalizetikz}{false}

\ifbool{externalizetikz}{%
  \tikzsetexternalprefix{}
  \tikzexternalize[]
}{}

\newcommand{\redtt}[1]{{\color{red}\ttfamily #1}}
\newcommand{\inputtikzwithfilename}[1]{%
  \ifbool{ignoretikz}{%
    \redtt{Ignoring tikz: \detokenize{#1}}%
  }{%
    \IfFileExists{#1}{%
      \tikzsetnextfilename{externalized/#1}%
      \input{#1}%
    }{%
      \redtt{File not found: \detokenize{#1}}%
    }%
  }%
}

\newcommand{\inputtikzR}[1]{
  \input{tikzR/#1}%
}

\usepackage{geometry}
\usepackage{float} %
\usepackage{microtype} %

\usepackage{authblk}

\usepackage{quotes}

\usepackage{amsmath, amssymb, amsthm, amsfonts}
\usepackage{bbm} %

\usepackage{hyperref}

\hypersetup{
	colorlinks=true,
	linkcolor=blue,
	filecolor=magenta,
	urlcolor=cyan,
	citecolor=blue
}

\usepackage{etoolbox} %

\usepackage{xcolor}

\usepackage{algorithm}
\usepackage{algpseudocode}

\providebool{cleanbuild}

\providecommand{\comment}{}
\renewcommand{\comment}[2]{%
    \ifbool{cleanbuild}{}{%
        \ifmmode%
            \text{\textcolor{#1}{\sffamily\small[{#2}]}}%
        \else%
            \noindent%
            \textcolor{#1}{\sffamily\small[{#2\unskip}]}%
        \fi%
    }%
}%

\newcommand{\CAP}{\expandafter\MakeUppercase}
\newcommand{\HR}{Hüsler--Reiss}

\newcommand{\multGP}{multivariate generalized Pareto}

\newcommand{\pd}{positive definite}

\newcommand{\psd}{positive semidefinite}

\newcommand{\cnd}{conditionally negative definite}

\newcommand{\pppd}{parallelepiped}

\newcommand{\myleft}{\mathopen{}\mathclose\bgroup\left}
\newcommand{\myright}{\aftergroup\egroup\right}

\newcommand{\mlr}[1]{\myleft(#1\myright)}
\newcommand{\lr}[1]{(#1)}

\newcommand{\set}[1]{\{#1\}}

\newcommand{\setm}[2]{\{#1 \mid #2\}}

\newcommand{\norm}[2][]{\Vert#2\Vert_{#1}}

\newcommand{\abs}[1]{|#1|}

\newcommand{\brackets}[1]{\lbrack#1\rbrack}

\newcommand{\halfopen}[1]{\lbrack#1)}

\newcommand{\N}{\mathbb{N}}
\newcommand{\R}{\mathbb{R}}
\newcommand{\Rd}[1][d]{\R^{#1}}
\newcommand{\Rdd}[1][d]{\R^{#1 \times #1}}

\newcommand{\E}{\mathbb{E}} %
\providecommand{\P}{}
\renewcommand{\P}[1][]{\mathbb{P}_{#1}}
\newcommand{\Var}{\mathrm{Var}} %

\newcommand{\normal}{\mathcal{N}} %
\newcommand{\Exp}{\mathrm{Exp}} %

\newcommand{\Dcal}{\mathcal{D}}

\newcommand{\Lcal}{\mathcal{L}}
\newcommand{\Hcal}{\mathcal{H}}
\newcommand{\Ical}{\mathcal{I}}
\newcommand{\Jcal}{\mathcal{J}}

\newcommand{\inftyvec}{\boldsymbol{\infty}}
\newcommand{\zerovec}{\mathbf{0}}
\newcommand{\onevec}{\mathbf{1}}
\newcommand{\dinvvec}{d\inv\onevec}
\newcommand{\indicator}[1]{\mathbbm{1}\set{#1}}
\newcommand{\kev}[1]{\mathbf{e}_{#1}} %

\newcommand{\Id}{\mathrm{I}} %

\newcommand{\without}[1]{\setminus\set{#1}}

\newcommand{\deq}{\stackrel{d}{=}}

\newcommand{\slr}[1]{^{\lr{#1}}}

\newcommand{\T}{^\top\!} %
\newcommand{\TT}{^\top} %
\newcommand{\inv}{^{-1}}

\newcommand{\pinv}{^+}

\newcommand{\orth}{^{\perp}}

\renewcommand{\det}[1]{\abs{#1}}

\newcommand{\pdet}[1]{\abs{#1}_+}

\newcommand{\trace}{\mathrm{tr}}
\newcommand{\laspan}{\mathrm{span}}
\newcommand{\image}{\mathrm{Im}}
\newcommand{\kernel}{\mathrm{ker}}

\newcommand{\half}{\tfrac{1}{2}}
\DeclareMathOperator*{\argmin}{arg\,min}

\usepackage{etoolbox} %
\providebool{cleanbuild}

\usepackage{amsthm}

\usepackage{hyperref}
\usepackage[capitalize,nameinlink]{cleveref} %
\crefformat{equation}{#2(#1)#3}
\crefformat{figure}{#2Figure~#1#3}
\crefmultiformat{figure}{#2Figures~#1#3}{ and~#2#1#3}{, #2#1#3}{ and~#2#1#3}
\crefrangeformat{figure}{#2Figures~#1#3--#4#2#5}
\AddToHook{cmd/appendix/before}{\crefalias{section}{appendix}}

\usepackage{natbib}

\usepackage{aliascnt}
\usepackage{xparse}
\let\oldtheorem\newtheorem
\RenewDocumentCommand{\newtheorem}{m o m o}{%
    \IfValueTF{#2}{%
        \IfNoValueTF{#4}{%
            \newaliascnt{#1}{#2}%
            \oldtheorem{#1}[#1]{#3}%
            \aliascntresetthe{#1}%
        }{%
            \oldtheorem{#1}[#2]{#3}[#4]%
        }%
    }{%
        \IfNoValueTF{#4}{%
            \oldtheorem{#1}{#3}%
        }{%
            \oldtheorem{#1}{#3}[#4]%
        }%
    }%
}

\makeatletter \renewcommand\hyper@natlinkbreak[2]{#1} \makeatother

\numberwithin{equation}{section}

\theoremstyle{plain}
\newtheorem{theorem}{Theorem}[section]
\newtheorem{proposition}[theorem]{Proposition}
\newtheorem{lemma}[theorem]{Lemma}
\newtheorem{corollary}[theorem]{Corollary}
\newtheorem{definition}[theorem]{Definition}

\theoremstyle{definition}
\newtheorem{example}[theorem]{Example}

\theoremstyle{remark}
\newtheorem{remark}[theorem]{Remark}

\newcommand{\provenlabel}[1]{%
  \label{#1}%
  \ifbool{cleanbuild}{}{%
    \makeatletter%
    \ifcsname r@proof:#1\endcsname%
      \hyperlink{proof:#1}{(See Proof.)}%
    \fi%
    \makeatother%
  }%
}

\newcommand{\proofref}[1]{%
  \label{proof:#1}%
  Proof of \hypertarget{proof:#1}{\cref{#1}}%
}

\usepackage{enumitem}
\setlist[enumerate,1]{label={(\arabic*)}}

\begin{document}

\title{Directional variograms for multivariate extremes}

\author[1]{Manuel Hentschel}
\author[2]{Frank Röttger}
\author[3]{Johan Segers}
\author[1]{Sebastian Engelke}

\affil[1]{University of Geneva}
\affil[2]{University of Twente}
\affil[3]{KU Leuven and UCLouvain}

\date{03.07.2026}
\maketitle

\begin{abstract}
    \noindent%
    Multivariate generalized Pareto distributions arise as limits of threshold exceedances
and form a central model class for multivariate extremes.
Existing inference methods based on the extremal variogram condition on the value of a single component,
which can be statistically suboptimal.
We generalize this approach by conditioning the multivariate generalized Pareto random vector $Y$ to lie on arbitrary half-spaces.
Specifically, for a direction vector $v$,
we introduce the random vector $Y^v = (Y \mid v^\top Y > 0)$
and define the associated $v$-variogram $\Gamma_{ij}^v=\mathrm{Var}(Y_i^v-Y_j^v)$.
We establish the decomposition $Y^v \stackrel{d}{=} W^v+E\mathbf{1}$
into the so-called $v$-extremal function~$W^v$
and an independent exponential random variable~$E$,
and derive several results relating these random variables to each other.
For logistic, Dirichlet, and Hüsler--Reiss multivariate generalized Pareto models,
we derive closed-form expressions for~$\Gamma^v$.
In the Hüsler--Reiss case,
we further derive new density representations
and identify a distinguished resistance-curvature vector $v_0$
that uniquely centers the Gaussian law of $W^{v_0}$ while characterizing the least-mass half-space.
On the statistical side, we
introduce empirical $v$-variograms and show in a simulation study that the choice of $v$
induces a pronounced bias-variance trade-off that is
strongly related to the mass of the conditioning half-space.
Moreover, combining information across multiple directions $v$ can substantially reduce estimation variance
relative to methods based on a single vector.

\end{abstract}

\clearpage

\section{Introduction}
\label{sec:introduction}

Multivariate extreme value statistics investigates rare events in large systems with the aim of describing, estimating, and predicting their dependence structure and  frequency of occurrence.
Such methods are essential in applications where simultaneous or cascading extremes play a critical role, including environmental risk assessment, financial stability analysis, and engineering safety.
Within this framework, the approach of threshold exceedances considers a vector as extreme whenever at least one of its components exceeds a high threshold.

Multivariate generalized Pareto distributions appear as the only possible
limit of threshold exceedances and are therefore natural models for
extreme events \citep{rootzen2006, kiriliouk2018POT}.
The left-hand side of \cref{figure:intro} shows a sample from a multivariate Pareto distribution $Y = (Y_1,\dots, Y_d)$,
where every point exceeds the threshold in at least one component.
The non-standard support of the random vector
$Y$ is often problematic, and several papers consider
auxiliary vectors  $Y^{(m)} = (Y \mid Y_m > 0)$ on a product-form set for some $m=1,\dots ,d$
to simplify estimation \citep{engelke2015} or enable the definition
of extremal conditional independence \citep{segers2020, engelke2020}.
Based on these auxiliary vectors, \cite{engelkeVolgushev2022} introduce a
dependence measure
called the extremal variogram $\Gamma_{ij}^{(m)} =\Var(Y_i^{(m)} - Y_j^{(m)})$, $i,j=1,\dots , d$.
The empirical variogram $\hat \Gamma^{(m)}$ is the sample version of $\Gamma^{(m)}$, which however only considers the subset of observations $y\in\Rd$ of $Y$ that satisfy $y_m>0$.
In order to use all data, the most common estimator therefore averages over all components $m$ to yield $\hat \Gamma = \frac1m \sum_{m=1}^d \hat \Gamma^{(m)}$.
The conditioning along the axis is an arbitrary choice that might be suboptimal for estimation in terms of bias and/or variance.

In this work, for any vector $v$ in the $d$-variate probability simplex, we introduce the random vector $Y^{v} = (Y \mid v\T Y > 0)$ and the corresponding $v$-variogram matrix $\Gamma^{v}$ with entries
\begin{equation*}
    \Gamma_{ij}^{v}
    =
    \Var\lr{Y_i^v - Y_j^v}
    , \quad i,j=1,\dots , d,
\end{equation*}
where the original extremal variograms appear as special cases using the canonical
unit vectors for $v$.
For several of the most commonly used parametric models for multivariate
Pareto distributions, namely the logistic, Dirichlet \citep{CT1991}, and \HR{} \citep{hueslerReiss1989} models, we obtain closed-form expressions for $\Gamma^{v}$.
We further derive the stochastic representation
\begin{equation*}
    Y^v
      \deq
    W^v + E \onevec,
\end{equation*}
where $W^v\deq P_v Y^v$ is obtained by applying the oblique projection $P_v = \Id - \onevec v\T$ along $\onevec$ onto $v\orth$ to the random vector $Y^v$.
The vector $W^v$ is called the $v$-extremal function and is a generalization of the classical extremal function obtained when $v$ equals a unit vector \citep{dombry2013}.
The distributions and densities (if they exist) of the
random vectors $Y^{v}$ and $W^v$ for different vectors $v$ can be
related through exponential tilting.

We study the important case of the \HR{} distribution in more detail.
This model is parameterized by a variogram matrix $\Gamma$ and the $v$-extremal function follows a $(d-1)$-dimensional Gaussian
distribution on the hyperplane $v\T$.
The $v$-variogram is in this case independent of the choice of $v$ and satisfies $\Gamma^v = \Gamma$.
As a main theoretical contribution, we derive new density representations
for $Y$ and $Y^v$ in this model class and provide closed forms for the normalizing constants.
We further identify a special vector $v_0$, called the resistance curvature vector,
which corresponds to the half-space $\setm{y}{v_0\T y > 0}$ that contains
the lowest expected number of samples from $Y$.
At the same time,
it is the only vector such that the Gaussian distribution of $W^{v_0}$ is
centered, and leads to a particularly elegant density expression.
The vector $v_0$ has also a geometric interpretation when considering $\Gamma$ as a Euclidean distance matrix \citep{devriendt2022a}.

Estimation of the empirical version $\hat \Gamma^{v}$ of the $v$-variogram matrix is based on exceedances in the set $\setm{y}{v\T y > 0}$,
as shown by the filled points in the left panel of \cref{figure:intro} for the example of $v = (0.3,0.7)\T$ in $d=2$ dimensions.
The points in the right panel of \cref{figure:intro} represent squared biases and variances of estimates of the $v$-variogram matrices,
averaged over all entries, for different vectors $v$.
The cyan points correspond to the original
extremal variograms estimates $\hat \Gamma^{(m)}$, and can be seen to
have the largest biases but the smallest variances;
the latter can be explained by the fact that the corresponding half-spaces have the largest expected sample size.
On the other extreme, the estimator $\hat \Gamma^{v_0}$ of the resistance curvature vector, shown in orange, exhibits the smallest bias but the largest variance.
The squares show ensemble estimators based on different sets of vectors $v$.
These combined variograms can be seen to have the lowest mean squared error, indicated by the gray line.

We provide an extensive simulation study for inference based on our new $v$-variogram, with focus on models in the domain of attraction of a \HR{} distribution.
We observe that depending on the choice of the threshold, different $v$-variograms are optimal in terms of mean squared error.
Furthermore, we find that ensembles of $v$-variograms permit a significant improvement of performance, and that the choice of the optimal ensemble depends on the threshold.

\begin{figure}
    \centering
    \resizebox{0.98\textwidth}{!}{%
        \includegraphics{figures/simulation2d/gammaEmp_rmpareto_sample.pdf}
        \includegraphics{figures/simulation2d/gammaEmp_rmpareto_sample_pareto_margin.pdf}
        \includegraphics{figures/simulation/HR/ensembles/variances_vs_biases2_with_level_curve.pdf}
    }
    \caption{%
        Left: sample from a \HR{} multivariate Pareto distribution in $d=2$ dimensions on exponential margins (all points),
        where filled points fall in the set $\setm{y}{v\T y > 0}$ for $v = (0.3,0.7)\T$.
        Center: the same sample on Pareto margins, obtained by the transformation $Z = \exp(Y)$.
        Right: performance of different $v$-variogram estimators for data in the domain of attraction of a \HR{} distribution.
        Ensemble estimators are shown as squares, and the MSE level curve in gray.
    }
    \label{figure:intro}
\end{figure}

\section{Preliminaries}
\label{sec:preliminaries}

\subsection{Projections and half-spaces}
\label{sec:linear_algebra}

For $v \in \Rd$, $v \neq \zerovec$
we define the hyperplane $v\orth = \setm{x \in \Rd}{v\T x = 0}$
and the half-space $\Hcal^v = \setm{x \in \Rd}{v\T x > 0}$.
For vectors satisfying $\onevec\T v \neq 0$,
we denote the oblique projection
along the $\onevec$-direction onto $v\orth$
as $P_v = \Id - \onevec v\T / (\onevec\T v)$.
Note that composed oblique projections
with the same kernel
simplify
to the last one, i.e., $P_v P_u = P_v$ for any $\onevec\T v, \onevec\T u > 0$.
Of particular interest will be
vectors from the probability simplex $\Delta_{d-1} = \setm{v \in \Rd}{\onevec\T v = 1, v \geq \zerovec}$
or more generally the affine hyperplane $\setm{v}{\onevec\T v = 1}$.
Note that for these vectors,
the expression for the oblique projection matrix simplifies to $P_v = \Id - \onevec v\T$.

\subsection{Multivariate generalized Pareto distribution}
\label{sec:MPD}

Let $X=(X_1,\ldots,X_d)\T$ be a random vector with standard exponential margins, that is, $\P\lr{X_i \leq x_i} = 1- \exp(-x_i)$ for $i = 1, \dots, d$ and $x_i \geq 0$.
We assume standard exponential margins in order to focus on the extremal dependence structure.
In practice we may need to employ empirical marginal transformations, leading to a rank-based procedure, which is done in extreme value theory at least since \citet{drees1998best}.
The choice of standard exponential margins over the more common standard Pareto margins simplifies the notation for the objectives of this paper.
For any $z\in \Lcal = \setm{x \in \Rd}{x \not\leq \zerovec}$, the limit of threshold exceedances
\begin{equation}
    \label{eq:mpd}
    \P\lr{Y \leq z}
    :=
    \lim_{u\to\infty} \P\lr{
        X - u\onevec \leq z
        \mid
        X \not\leq u \onevec
    }
    =
    \frac{
        \Lambda^c\lr{z \land \zerovec} - \Lambda^c\lr{z}
    }{
        \Lambda^c\lr{\zerovec}
    }
    ,
\end{equation}
if it exists, follows a multivariate generalized Pareto distribution
\citep{rootzen2006}.
We then say that the random vector $X$ is in the domain of attraction of the multivariate generalized Pareto random vector~$Y=(Y_1,\ldots,Y_d)\T$.
The distribution is characterized by an exponent measure $\Lambda$,
and we write $\Lambda^c(z) := \Lambda(\halfopen{-\infty,\infty}^d \setminus \brackets{-\inftyvec, z})$.
Because we started with normalized
margins of $X$, the measure $\Lambda$ is also normalized in the
sense that $\Lambda(y_i > 0) = 1$ for all $i=1,\dots, d$.
Moreover, it is homogeneous in the sense that $\Lambda(t\onevec + B) = \exp(-t) \Lambda(B)$.
If the exponent measure allows a density with respect to the Lebesgue measure, we call this density $\lambda(y)$,
and this density then satisfies $\lambda(y + t\onevec) = \exp(-t) \lambda(y)$ for all $y \in \Rd$ and all $t \in \R$.

A different but equivalent perspective on multivariate extremes
is through point processes.
We say that a random vector $U$ on $\Rd$
with $\E(e^{U_i}) = 1$ for $i=1,\dots,d$,
is a generator of $Y$,
if the corresponding exponent measure $\Lambda$ is
the intensity of the Poisson point process \citep[e.g.,][Proposition~5.8]{resnick2008extreme}
\begin{equation*}
    \Pi
    =
    \sum_{j\in\N} \delta_{\onevec\xi_j + U_{(j)}}
    ,
\end{equation*}
where $(\xi_j)_{j\in \N}$ are the points of a Poisson point process on $\R$ with intensity
$e^{-x} dx$ for $x \in \R$,
and $U_{(j)}$ are independent copies of $U$.
Moreover, we can represent the exponent measure as
\begin{equation}
    \label{eq:exponent_measure}
    \Lambda(A)
    =
    \int_{\R}
    e^{-x}
    \P(U + \onevec x \in A)
    dx
    ,
\end{equation}
for Borel sets $A \subseteq \Rd$.
The random vector $U$ can also serve as a $U$-generator of the multivariate generalized Pareto distribution in the sense of \citet[Proposition~9]{rootzen2018}.
Note that different generators $U$ can result in the same exponent measure $\Lambda$.
For more details see \cref{sec:ppp_construction} and \citet[][Section 2.1]{Corradini2024}.

Several popular parametric models for multivariate generalized Pareto distributions exist.

\begin{example}
    \label{ex:definitionLog}
    The extremal logistic model with parameter $\theta\in(0,1)$
    has generator $U = (U_1, \dots, U_d)$
    where $U_i$ are independent with Gumbel distribution with scale $\theta$
    and location $-\log\Gamma(1-\theta)$ \citep[e.g.,][]{dombry2016}.
\end{example}

\begin{example}
    \label{ex:definitionDirichlet}
    The Dirichlet model \citep{CT1991} with parameters $\alpha_1, \dots, \alpha_d>0$ has generator $U = (\log V_1, \dots, \log V_d)$
    where $V_i$ are independent $\text{Gamma}(\alpha_i, 1/\alpha_i)$ random variables
    \citep[e.g.,][]{kiriliouk2018POT,Corradini2024}.
\end{example}

\begin{example}
    \label{ex:definitionHR}
    The \HR{} generalized Pareto distribution \citep{hueslerReiss1989}
    is parameterized by a conditionally negative definite variogram matrix $\Gamma$ in the set $\Dcal$ of all symmetric matrices $\Gamma$ with zero diagonal and positive non-diagonal entries that satisfy $x\T\Gamma x < 0$ for all $x\in \onevec\orth$.
    Let
    $\Sigma = P_{\onevec} (-\half\Gamma) P_{\onevec}$,
    $\Theta = \Sigma\pinv$ be its pseudoinverse,
    and $r_\Theta = -\tfrac{1}{2d} \Theta\Gamma\onevec$.
    The \HR{} distribution has exponent measure density \citep{hentschel2023}
    \begin{equation}
         \label{eq:HR1}
        \lambda(y; \Gamma)
        \propto
        \exp\lr{
            -\half
            y\T \Theta y
            +
            r_\Theta\TT y
            -
            d\inv\onevec\T y
        }
        , \qquad
        y \in \Rd
        .
    \end{equation}
    For unit vectors $e_m$, $m = 1, \dots d$,
    similar expressions exist using
    $\Sigma^{e_m} := P_{e_m} (-\half\Gamma) P_{e_m}\T$
    \citep{engelke2015}.
    Corresponding to these density expressions,
    a \HR{} random vector $Y$ satisfies the stochastic representations
    \begin{equation}
        \label{eq:stoch_repr}
        Y \mid \set{Y_m > 0}
        \deq
        W\slr{m} + \onevec E
        , \qquad
        Y \mid \set{\onevec\T Y > 0}
        \deq
        W + \onevec E,
    \end{equation}
    where
    $W\slr{m}$ and $W$ are degenerate Gaussian vectors with covariance matrices
    $\Sigma^{e_m}$ and $\Sigma$, respectively,
    and $E \sim \Exp(1)$ is an independent exponential random variable.
\end{example}

\subsection{Variograms and Euclidean distance matrices}\label{SUBSEC:Euclidean}

Let $Z$ be a centered random vector taking values in $\Rd$ with finite covariance matrix $\Omega = \E\lr{Z Z\T}$.
Its variogram matrix $\Gamma$ is a symmetric $d\times d$-matrix given by the expected squared differences of the entries of $Z$, that is
\begin{equation}\label{vario_def}
    \Gamma_{ij}
    =
    \E\lr{Z_i - Z_j}^2
    =
    \Omega_{ii} + \Omega_{jj} - 2\Omega_{ij}
    .
\end{equation}
For any \pd{} or \psd{} $\Omega$ satisfying
$\kernel(\Omega) \cap \set{\onevec}\orth = \set{\zerovec}$,
the variogram matrix $\Gamma$ has full rank and is \cnd{}.
It is well known that the set of conditionally negative definite variogram matrices $\Dcal$ is equivalent to the set of Euclidean distance matrices \citep{gower1982}.
This means that if $\Gamma$ is conditionally negative definite there exists a set of $d$ points in $\R^{d-1}$,
which we denote as the row vectors $x_i$ of a matrix $X\in\Rd[d\times (d-1)]$,
such that for all $1 \leq i,j \leq d$ we have
\begin{equation*}
    \Gamma_{ij}
    =
    \norm{x_i - x_j}^2
    .
\end{equation*}
The covariance matrices $\Sigma$ and $\Sigma^{e_m}$, as defined in \cref{ex:definitionHR} above, are equal
to the inner products $XX\T$ of these points for certain choices of $X$.
As discussed for example in \citet{gower1982},
the (squared) distances $\Gamma_{ij}$ are invariant under translation of the points.
Let $\onevec\T v = 1$ and consider
$X^{v} = P_v X$,
which shifts the row vectors of $X$ by the same vector $v\T X$,
resulting in a point cloud such that the linear combination of points
given by $v$ is the origin, i.e., $v\T X = \zerovec\T \in \Rd[d-1]$.
The corresponding inner product matrix is
\begin{equation*}
    \Sigma^{v}
    =
    X^{v} (X^{v})\T
    .
\end{equation*}
In the context of a \HR{} distribution with variogram matrix $\Gamma$, a canonical unit vector $v = \kev{m}$
yields the covariance matrix $\Sigma^{e_m}$ of $W\slr{m}$
in \eqref{eq:stoch_repr}.
Similarly, choosing $v = \dinvvec$,
we get the covariance matrix $\Sigma$ of $W$ in \eqref{eq:stoch_repr}.
\begin{figure}
    \centering
    \begin{subfigure}{0.24\textwidth}
        \resizebox{\textwidth}{!}{%
            \inputtikzR{rowVectors/gamma_sigmaii_sigmaij}%
        }
        \caption{$XX\T$}
    \end{subfigure}
    \begin{subfigure}{0.24\textwidth}
        \resizebox{\textwidth}{!}{%
            \inputtikzR{rowVectors/shift_e1}%
        }
        \caption{$v = \kev{1}$}
    \end{subfigure}
    \begin{subfigure}{0.24\textwidth}
        \resizebox{\textwidth}{!}{%
            \inputtikzR{rowVectors/shift_onevec}%
        }
        \caption{$v = \dinvvec$}
    \end{subfigure}
    \begin{subfigure}{0.24\textwidth}
        \resizebox{\textwidth}{!}{%
            \inputtikzR{rowVectors/shift_circumhyper}%
        }
        \caption{$v = v_0$}
    \end{subfigure}
    \caption{%
        Initial matrix $\Sigma$ and the effect of shifting the row vectors of $X$, where the points correspond to row vectors of $X^v$.
    }
\end{figure}
Another interesting choice for $v$ is
\begin{equation}
    \label{eq:resistance}
    v_0 :=2t_0 \Gamma\inv\onevec,
    \qquad \text{where} \qquad
    t_0 := \frac{1}{2}(\onevec\T \Gamma\inv\onevec)\inv.
\end{equation}
In the literature, $t_0$ and $v_0$ are considered as resistance radius and resistance
curvature of the Euclidean distance matrix $\Gamma$,
as $v_0$ is a notion of discrete curvature,
see \cite{devriendt2022a}.
Clearly, the entries of $v_0$ sum up to one, because
$v_0\T \onevec = 2 t_0 \onevec\T \Gamma\inv \onevec = 1$.
The term resistance radius comes from the fact that this
value is the squared radius of the circumhypersphere that passes through all points
of the simplex that spans the Euclidean distance matrix $\Gamma$.

In the context of extremal graphical models \citep{engelke2020},
it is often convenient to parameterize
a \HR{} random vector $Y$ by the precision matrix
$\Theta = (P_{\onevec} (-\half\Gamma) P_{\onevec})\pinv$,
introduced in \cref{ex:definitionHR},
instead of the variogram matrix $\Gamma$.
This matrix is a signed graph Laplacian matrix
of the extremal conditional independence graph of $Y$ \citep{hentschel2023},
meaning that
$\Theta_{ij} = 0$ if and only if
$Y_i \perp_e Y_j \mid Y_{\set{1, \dots, d}\without{i,j}}$.
In this context, the matrix $\Theta$ is typically referred to as \HR{} precision matrix.
We show an example for a \HR{} graphical model in \cref{fig:HRGM}.
\begin{figure}
    \centering
    \begin{subfigure}{0.17\textwidth}
        \centering
        \resizebox{\textwidth}{!}{%
            \inputtikzR{graphs/G_03_graph}%
        }
        \caption{Graph}
    \end{subfigure}
    \begin{subfigure}{0.29\textwidth}
        \centering
        \begin{equation*}
            \Gamma
            =
            \mlr{\input{examples/Gamma_03.tex}}
        \end{equation*}
        \caption{Variogram}
    \end{subfigure}
    \begin{subfigure}{0.44\textwidth}
        \centering
        \begin{equation*}
            \Theta
            =
            \mlr{\input{examples/Theta_03.tex}}
        \end{equation*}
        \caption{Precision matrix}
    \end{subfigure}
    \caption{
        Variogram and precision matrices for a \HR{} graphical model with respect to the graph in (a).
    }
    \label{fig:HRGM}
\end{figure}

\section{Stochastic representations on non-negative half-spaces}
\label{sec:stoch_rep}

Throughout this section,
we consider vectors $u, v$ from the probability simplex $\Delta_{d-1}$
as defined in \cref{sec:linear_algebra}.
For such a vector $v$ and a multivariate generalized Pareto random vector $Y$ with standard exponential margins,
we denote the restriction of $Y$ to $\Hcal^v = \setm{x \in \Rd}{v\T x > 0}$
as $Y^v = Y \mid \set{v\T Y > 0}$.
In the special case $v = d\inv\onevec$,
we write $Y^\onevec$ instead of $Y^{d\inv\onevec}$
as this is more concise and mathematically equivalent.

\begin{remark}
    The requirement that $v \geq \zerovec$ ensures that the half-space $\Hcal^v$
    is contained in $\Lcal = \setm{x \in \Rd}{x \not\leq 0}$, the support of $Y$.
    Many of the results presented below can be extended to
    more general $v$, satisfying only $\onevec\T v = 1$,
    but we focus on the non-negative case to preserve
    a clear link between $Y^v$ and the original \multGP{} distribution $Y$.
\end{remark}

\subsection{Stochastic representation}

In this section we derive stochastic representations of the
random vectors $Y^v$ in terms of a so-called $v$-extremal function $W^v$.
We study the relation between versions of these random vectors for different
$v\in \Delta_{d-1}$.

\begin{proposition}\label{prop:extremefunctions}
    \provenlabel{prop:stochrep_halfspace}
    Let $Y$ be a
    \multGP{} random vector,
    $v \in \Delta_{d-1}$,
    and $Y^v = Y \mid \set{v\T Y > 0}$.
    Then we have
    the stochastic representation
    \begin{align*}
        Y^v
        &\deq
        W^v + E \onevec,
        \qquad\text{with}\\
        W^v
        &\deq
        P_v Y^v,
    \end{align*}
    where $E$ is a standard exponential random variable independent of $W^v$.
    We call $W^v$ the $v$-extremal function of $Y$ and have $\P(W^v \in v\orth) = 1$.
\end{proposition}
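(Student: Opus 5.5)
The plan is to decompose $Y^v$ into its component along $\onevec$ and its component in $v\orth$, using the oblique projection $P_v = \Id - \onevec v\T$. Since $\onevec\T v = 1$, every $y \in \Rd$ can be written uniquely as $y = P_v y + (v\T y)\onevec$ with $P_v y \in v\orth$, because $v\T P_v y = v\T y - (v\T\onevec)(v\T y) = 0$. Setting $T := v\T Y^v$ and $W^v := P_v Y^v$, we get $Y^v = W^v + T\onevec$ and $\P(W^v\in v\orth)=1$ immediately. So the whole content of the proposition is to show that $T \sim \Exp(1)$ and that $T$ is independent of $W^v$.

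For this I would work with the exponent measure $\Lambda$ and its homogeneity $\Lambda(t\onevec + B) = e^{-t}\Lambda(B)$. First I would establish that, for Borel $A \subseteq \Lcal$ bounded away from the complement of $\Lcal$, we have $\P(Y \in A) = \Lambda(A)/\Lambda(\Lcal_0)$, where $\Lcal_0 = \setm{x}{x\not\leq\zerovec}$. This follows from \eqref{eq:mpd}, which identifies the law of $Y$ as $\Lambda$ restricted to $\Lcal$ and normalised. Because $v\geq\zerovec$ gives $\Hcal^v\subseteq\Lcal$, it then follows that $Y^v$ has law $\Lambda(\cdot\cap\Hcal^v)/\Lambda(\Hcal^v)$.

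Next, for a Borel set $B\subseteq v\orth$ and $t\geq 0$, consider the set $\setm{y}{P_v y\in B,\ v\T y> t}$. This set equals $t\onevec + \setm{y}{P_v y\in B,\ v\T y>0}$, since $P_v(y+t\onevec)=P_v y$ and $v\T(y+t\onevec)=v\T y+t$. Homogeneity then yields
\begin{equation*}
\P\lr{W^v\in B,\ T>t} = e^{-t}\,\frac{\Lambda\lr{\setm{y}{P_v y\in B,\ v\T y>0}}}{\Lambda(\Hcal^v)} = e^{-t}\,\P\lr{W^v\in B}.
\end{equation*}
This product form gives both the $\Exp(1)$ law of $T$ (take $B=v\orth$) and the independence of $T$ and $W^v$. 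Setting $E:=T$ gives $Y^v\deq W^v+E\onevec$.

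The main obstacle is to justify that $\Lambda(\Hcal^v)$ is finite and positive and that the identification of the law of $Y$ with normalised $\Lambda$ is legitimate on $\Hcal^v$. For finiteness, $\Hcal^v\subseteq\bigcup_i\set{y_i>0}$ and $\Lambda(y_i>0)=1$. For positivity, since $v\neq\zerovec$ and $\onevec\in\Hcal^v$ direction-wise, homogeneity shows that $\Lambda(\Hcal^v)=0$ would force $\Lambda(\setm{y}{v\T y>-s})=0$ for all $s$. This contradicts $\Lambda(y_i>0)=1$, because $\set{y_i>0}$ is covered by $\bigcup_s\setm{y}{v\T y>-s}$. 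Alternatively, one can argue via the representation \eqref{eq:exponent_measure}, where the same computation reads $\int e^{-x}\P(v\T U + x>t,\dots)dx$ and the substitution $x\mapsto x+t$ gives the factor $e^{-t}$ directly.
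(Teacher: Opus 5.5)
Your proof is correct, and it takes a different route from the paper's. The paper starts from the spectral representation $Y\deq S+E\onevec$ of \citet[Theorem~7]{rootzen2018}, with $S=Y-\onevec\max(Y)$ independent of $E\sim\Exp(1)$. It observes that $P_vY=P_vS$ depends on $S$ alone and that $v\T Y=E-\max(P_vY)$, so conditioning on $\Hcal^v$ amounts to conditioning on $E>\max(P_vS)$. An explicit integration over $E$ (memorylessness) then produces the factor $e^{-s}$ on sets of the form $B+\onevec\cdot[s,\infty)$.

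You never use that representation. The translation identity $\setm{y}{P_vy\in B,\ v\T y>t}=t\onevec+\setm{y}{P_vy\in B,\ v\T y>0}$, together with the homogeneity $\Lambda(t\onevec+\cdot)=e^{-t}\Lambda(\cdot)$, gives the product form immediately. Your route is shorter and self-contained: it needs neither the external theorem nor the nonlinear map $x\mapsto x-\onevec\max(x)$. It also presents the law of $W^v$ as the normalized push-forward $\Lambda(P_v^{-1}(\cdot)\cap\Hcal^v)/\Lambda(\Hcal^v)$, which is the viewpoint the paper itself adopts later when proving \cref{prop:generator_tilting}. The paper's route, in exchange, ties $W^v$ directly to the standard spectral vector $S$: its computation amounts to $\P(W^v\in B)\propto\E\lr{\indicator{P_vS\in B}\,e^{-\max(P_vS)}}$.

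Two small remarks. First, drop the restriction to sets \emph{bounded away from the complement of $\Lcal$}. $\Hcal^v$ does not satisfy it, since it contains $\epsilon\onevec$ for every $\epsilon>0$. It is also unnecessary: $\Lambda(\Lcal)\le\sum_i\Lambda(y_i>0)=d$, and the probability measures $\P(Y\in\cdot)$ and $\Lambda(\cdot\cap\Lcal)/\Lambda(\Lcal)$ agree on all sets $\{y\le z\}$ by \eqref{eq:mpd}, hence on all Borel sets.

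Second, your positivity argument (covering $\{y_i>0\}$ by $\bigcup_s\{v\T y>-s\}$) tacitly uses that $\Lambda$ charges only points with finite coordinates. This holds in the paper's setting, where generators are $\R^d$-valued, and the paper's own proof needs it as well in order to form $P_vY$. It fails for generalized Pareto laws with mass at $-\infty$. For instance, in the independence model $v\T Y=-\infty$ almost surely whenever $v$ has two positive entries, so $Y^v$ is not even defined.
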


\begin{figure}
    \center
    \resizebox{0.30\textwidth}{!}{%
        \inputtikzR{densityProjections/e1}
    }
    \hfill
    \resizebox{0.30\textwidth}{!}{%
        \inputtikzR{densityProjections/OneVec}
    }
    \hfill
    \resizebox{0.30\textwidth}{!}{%
        \inputtikzR{densityProjections/V_onePerp}
    }
    \caption{%
        Geometric illustration of the stochastic representation in
        \cref{prop:stochrep_halfspace}
        for $v=e_1$ (left), $v=\onevec/2$ (center), and a general $v$ (right),
        including the extremal function $W^v$ and the corresponding multivariate generalized Pareto distributions $Y^v$.
    }
    \label{fig:densityProjection}
\end{figure}
Again, we use the more concise notation $W^\onevec = W^{d\inv\onevec}$.
This decomposition is illustrated in \cref{fig:densityProjection}.
The following results allow us to construct the $v$-extremal functions
and relate them to each other.
Recall the notion of a generator $U$ of a multivariate generalized Pareto distribution as defined in \cref{sec:MPD}.

\begin{proposition}
    \provenlabel{prop:generator_tilting}
    Let $U$ be any generator of $Y$ and
    let $W^v$ be the $v$-extremal function defined in \cref{prop:stochrep_halfspace}.
    We have the stochastic representation
    \begin{equation*}
        W^v
        \deq
        P_v U^v
        ,
    \end{equation*}
    where $U^v$ is the exponentially tilted random vector defined in distribution by
    \begin{equation*}
        \P(U^v \in A)
        =
        \frac{1}{
            \E\lr{\exp\lr{v\T U}}
        }
        \E\lr{
            \exp\lr{v\T U}
            \indicator{U \in A}
        }
        ,
    \end{equation*}
    for Borel sets $A \subseteq \Rd$.
    If $U$ has density $f_U$,
    then $U^v$ has density
    $
        f_{U^v}(y)
        =
        \frac{1}{
            \E\lr{\exp\lr{v\T U}}
        }
        \exp\lr{v\T y}
        f_U(y)
    $
    for $y \in \Rd$.
\end{proposition}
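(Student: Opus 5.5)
We compute the law of $Y^v$ directly from the exponent measure representation \eqref{eq:exponent_measure} and then read off the law of $P_v Y^v$. Since $v \geq \zerovec$, the half-space $\Hcal^v$ lies in $\Lcal$. The law of $Y$ is $\Lambda$ restricted to $\Lcal$ and normalized, so for Borel $B \subseteq \Hcal^v$ we have
\begin{equation*}
  \P(Y^v \in B) = \frac{\Lambda(B)}{\Lambda(\Hcal^v)}.
\end{equation*}
We take $B = \setm{y}{P_v y \in A,\ v\T y > s}$ with $s \geq 0$ and Borel $A \subseteq v\orth$. Every $y$ decomposes uniquely as $y = P_v y + (v\T y)\onevec$, because $\onevec\T v = 1$. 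Sets of this form therefore determine the joint law of $(P_v Y^v, v\T Y^v)$. Once that joint law is known, $Y^v = P_v Y^v + (v\T Y^v)\onevec$ gives the representation with $W^v := P_v Y^v$ and $E := v\T Y^v$. The statement $\P(W^v \in v\orth)=1$ is immediate from $v\T P_v = v\T - (v\T\onevec) v\T = \zerovec\T$.

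The key step is the computation via \eqref{eq:exponent_measure}. For $y = U + x\onevec$ we have $P_v y = P_v U$, since $P_v\onevec = \zerovec$, and $v\T y = v\T U + x$. Hence
\begin{equation*}
  \Lambda(B) = \int_\R e^{-x}\, \P\lr{P_v U \in A,\ v\T U + x > s}\, dx
  = \E\lr{\indicator{P_v U \in A} \int_{s - v\T U}^\infty e^{-x}\, dx},
\end{equation*}
using Tonelli's theorem, which applies since the integrand is nonnegative. The inner integral equals $e^{-s} e^{v\T U}$, so
\begin{equation*}
  \Lambda(B) = e^{-s}\, \E\lr{e^{v\T U}\indicator{P_v U \in A}}.
\end{equation*}
Taking $s = 0$ and $A = v\orth$ gives $\Lambda(\Hcal^v) = \E(e^{v\T U})$. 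This quantity is finite: by convexity of the exponential and $v \in \Delta_{d-1}$, we get $e^{v\T U} \leq \sum_i v_i e^{U_i}$, whose expectation equals $1$. It is also positive, so the normalization is legitimate.

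Dividing, we obtain
\begin{equation*}
  \P(P_v Y^v \in A,\ v\T Y^v > s) = e^{-s}\, \P(P_v U^v \in A),
\end{equation*}
with $U^v$ the tilted vector in the statement of \cref{prop:generator_tilting}. This product form yields three facts at once:
\begin{itemize}
  \item $E = v\T Y^v$ is standard exponential;
  \item $E$ is independent of $P_v Y^v$;
  \item $W^v = P_v Y^v \deq P_v U^v$.
\end{itemize}
Together these prove both \cref{prop:stochrep_halfspace} and the representation in \cref{prop:generator_tilting}. The density claim is immediate: the Radon--Nikodym derivative of the tilted law with respect to the law of $U$ is $e^{v\T y}/\E(e^{v\T U})$.

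The main obstacle is bookkeeping rather than depth. One needs to justify that $Y$ has law $\Lambda(\cdot\cap\Lcal)/\Lambda(\Lcal)$, which follows from \eqref{eq:mpd} and the standard point-process construction. One also needs to confirm that the sets $B$ above generate the Borel $\sigma$-field on $\Hcal^v$. The latter holds because $y \mapsto (P_v y, v\T y)$ is a linear bijection from $\Rd$ onto $v\orth \times \R$. I would state both points briefly and then do the computation above.
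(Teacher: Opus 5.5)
Your argument is correct and is essentially the paper's proof: both rest on the same Tonelli computation in the exponent-measure representation \eqref{eq:exponent_measure}, where integrating $e^{-x}$ from $s - v\T U$ upward produces the tilting factor $e^{v\T U}$ and the normalization $\Lambda(\Hcal^v) = \E\lr{\exp\lr{v\T U}}$. The only difference is packaging: the paper checks that the candidate $\onevec E + U^v - \onevec v\T U^v$ has the law of $Y^v$ on all Borel sets, whereas you compute the joint law of $(P_v Y^v, v\T Y^v)$ on product sets, which also gives \cref{prop:stochrep_halfspace} directly and makes explicit the finiteness of $\E\lr{\exp\lr{v\T U}}$ that the paper leaves implicit.
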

\begin{corollary}
    \provenlabel{cor:extremal_functions_are_generators}
    For any $v\in \Delta_{d-1}$ and $v$-extremal function $W^v$,
    the shifted random vector $W^v + \onevec c$
    with $c = \log \E\lr{\exp\lr{v\T U}}$
    is a generator of $Y$.
\end{corollary}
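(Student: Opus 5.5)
We must show that $\tilde U := W^v + \onevec c$, with $c = \log \E(\exp(v\T U))$, satisfies $\E(e^{\tilde U_i}) = 1$ for all $i$, and that its exponent measure, defined through \eqref{eq:exponent_measure}, coincides with $\Lambda$. The plan is to use \cref{prop:generator_tilting}, which gives $W^v \deq P_v U^v$. Since the exponent measure in \eqref{eq:exponent_measure} depends only on the law of the generator, we may replace $W^v$ by $P_v U^v$ throughout. Hence $\tilde U \deq U^v - \onevec (v\T U^v - c)$.

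For the normalization, compute $\E(e^{\tilde U_i})$ using the tilted law. Writing $M := \E(e^{v\T U}) = e^c$,
\begin{equation*}
    \E\lr{e^{\tilde U_i}}
    =
    \frac{1}{M}\E\lr{e^{v\T U} e^{U_i - v\T U + c}}
    =
    \E\lr{e^{U_i}} = 1 .
\end{equation*}
The same computation with a general test function shows that the law of the generator transfers correctly, and we organize the exponent-measure step around it.

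For the exponent measure, take a Borel set $A$ and write
\begin{equation*}
    \tilde\Lambda(A) = \int_\R e^{-x}\,\P\lr{\tilde U + \onevec x \in A}\,dx .
\end{equation*}
Expressing the probability via the tilting gives
\begin{equation*}
    \tilde\Lambda(A) = \frac{1}{M}\,\E\Bigl[e^{v\T U}\int_\R e^{-x}\,\indicator{U + \onevec(x + c - v\T U) \in A}\,dx\Bigr].
\end{equation*}
Now substitute $s = x + c - v\T U$, so that $e^{-x} = e^{-s} e^{c} e^{-v\T U}$. The factor $e^{c}e^{-v\T U}$ cancels exactly against $M^{-1}e^{v\T U}$. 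The bracket then becomes $\int_\R e^{-s}\,\indicator{U + \onevec s \in A}\,ds$, whose expectation is $\Lambda(A)$ by \eqref{eq:exponent_measure}. The interchange of expectation and integral is justified by Tonelli, since all integrands are nonnegative.

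The main obstacle is conceptual rather than computational. We must make sure that "generator of $Y$" only requires reproducing $\Lambda$ together with the moment condition, and is not tied to a particular construction. We must also check that the finiteness $M < \infty$ implicit in \cref{prop:generator_tilting} holds. This follows because $v \in \Delta_{d-1}$: by convexity of the exponential, $e^{v\T U} \le \sum_i v_i e^{U_i}$, so $M \le 1$ and $M > 0$, and $c$ is well defined.
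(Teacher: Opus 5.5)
Your proof is correct and follows essentially the same route as the paper's: replace $W^v + \mathbf{1}c$ by $P_v U^v + \mathbf{1}c$, unfold the exponential tilting inside the integral defining the exponent measure, and substitute $x \mapsto x + v^\top U - c$ so that the factor $e^{c}/\mathbb{E}(e^{v^\top U})$ cancels. Your extra checks, which the paper leaves implicit, are both valid: the moment condition $\mathbb{E}(e^{\tilde U_i}) = 1$ (required by the definition of a generator, and also implied by matching $\Lambda$ since $\Lambda(\{y_i > 0\}) = 1$), and $0 < \mathbb{E}(e^{v^\top U}) \le 1$ via Jensen's inequality.
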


Since the distribution of $Y^v$ is defined by
the restriction of the exponent measure $\Lambda$ to the half-space $\Hcal^v$,
its density must be proportional to the exponent measure density $\lambda$,
if it exists.
In the proof of \cref{prop:generator_tilting},
the proportionality constant is computed
to be $\Lambda(\Hcal^v) = \E\lr{\exp\lr{v\T U}}$.
Note that this normalization constant is independent
of the chosen generator $U$ of a given multivariate generalized Pareto distribution $Y$.
We have the following representation of the density of $Y^v$.

\begin{corollary}
    \provenlabel{cor:density_Yv_from_lambda}
    If $Y$ has an absolutely continuous exponent measure $\Lambda$ with density $\lambda$,
    then for $v \in \Delta_{d-1}$ the density of $Y^v$ is
    \begin{equation*}
        f_{Y^v}(y)
        =
        \frac{1}{
            \E\lr{\exp\lr{v\T U}}
        }
        \indicator{v\T y > 0}
        \lambda(y)
        .
    \end{equation*}
    For $u \in \Delta_{d-1}$ and for $y$ satisfying both $v\T y > 0$ and $u\T y > 0$,
    we have
    \[ 
        f_{Y^v}(y)
        = f_{Y^u}(y)
        \frac{
            \E\lr{\exp\lr{u\T U}}
        }{
            \E\lr{\exp\lr{v\T U}}
        }
        . 
    \]
\end{corollary}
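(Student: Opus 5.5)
The plan is to compute the law of $Y^v$ directly from the definition of the \multGP{} distribution as a normalized restriction of the exponent measure. Since $v \ge \zerovec$ and $\onevec\T v = 1$, the half-space $\Hcal^v$ lies inside $\Lcal$. So for Borel $A \subseteq \Hcal^v$ we have $\P(Y \in A \mid v\T Y > 0) = \Lambda(A)/\Lambda(\Hcal^v)$. This uses $\P(Y\in B) = \Lambda(B)/\Lambda(\Lcal)$ for $B \subseteq \Lcal$, which is the content of \eqref{eq:mpd}; the normalizing constant $\Lambda(\Lcal)=\Lambda^c(\zerovec)$ then cancels.

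\emph{Step 1: identify the normalizing constant.} Using \eqref{eq:exponent_measure} with $A=\Hcal^v$, we get $\Lambda(\Hcal^v) = \int_\R e^{-x}\P(v\T U + x > 0)\,dx$, because $v\T(U+\onevec x) = v\T U + x$. By Fubini this equals $\E \int_{-v\T U}^\infty e^{-x}dx = \E(e^{v\T U})$. This constant is finite: by convexity of $\exp$ and $\E(e^{U_i})=1$, it is at most $\sum_i v_i \E e^{U_i} = 1$. Hence $\Lambda$ restricted to $\Hcal^v$ is a finite measure, and $Y^v$ has law $\Lambda|_{\Hcal^v}/\E(e^{v\T U})$.

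\emph{Step 2: pass to densities.} If $\Lambda$ has Lebesgue density $\lambda$, then for every Borel $A$ we have $\P(Y^v\in A) = \int_A \indicator{v\T y>0}\lambda(y)\,dy / \E(e^{v\T U})$. This is exactly the first claim. Step 1 also shows that the constant depends only on $\Lambda$, so the choice of generator is irrelevant. Alternatively, one can quote that this normalization was computed in the proof of \cref{prop:generator_tilting}.

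\emph{Step 3: compare two directions.} On the set where both $v\T y>0$ and $u\T y>0$, both indicators equal one. Then $f_{Y^v}(y)\,\E(e^{v\T U}) = \lambda(y) = f_{Y^u}(y)\,\E(e^{u\T U})$, and rearranging gives the stated identity.

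The only step needing care is Step 1: justifying Fubini, which is fine since the integrand is nonnegative, and being precise that $Y$'s law on $\Lcal$ is the normalized $\Lambda$. Everything else is bookkeeping. Note that densities are only defined almost everywhere, so the equality in Step 3 holds for the chosen version $\lambda$.
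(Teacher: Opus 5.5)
Your proposal is correct and follows the same route as the paper. You identify the law of $Y^v$ as the normalized restriction of $\Lambda$ to $\Hcal^v\subset\Lcal$, read off the density, and rearrange for two directions. The only difference is that you compute $\Lambda(\Hcal^v)=\E(e^{v^\top U})$ directly by Tonelli and bound it by $1$ via convexity, whereas the paper quotes this constant from the proof of \cref{prop:generator_tilting}, where it is obtained indirectly by matching normalizing constants; your version is slightly more self-contained.
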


\cref{prop:generator_tilting} allows us to compute $v$-extremal functions
if a generator $U$ of $Y$ is known.
Furthermore,
we can use the additivity of exponential tilting
to obtain the following relationship between different extremal functions.
\begin{lemma}
    \provenlabel{corr:extremal_function_tilting}
    Consider two vectors $u,v \in \Delta_{d-1}$
    and the corresponding extremal functions $W^u$ and $W^v$.
    Then, for any Borel set $A^v \subset v\orth$,
    and with $A^u = P_u A^v \subset u\orth$,
    we have
    \begin{equation*}
        \P(W^v \in A^v)
        =
        \frac{1}{
            \E\lr{\exp\lr{v\T W^u}}
        }
        \E\lr{
            \exp\lr{v\T W^u}
            \indicator{W^u \in A^u}
        }
        .
    \end{equation*}
    If $W^u$ has density $f_{W^u}$ on $u\orth$,
    then $W^v$ has the following density
    on $v\orth$:
    \begin{equation*}
        f_{W^v}(y)
        =
        \frac{
            \norm{u}
        }{
            \norm{v}
        }
        \frac{1}{
            \E\lr{\exp\lr{v\T W^u}}
        }
        \exp\lr{-u\T y}
        f_{W^u}(P_u y)
        .
    \end{equation*}
\end{lemma}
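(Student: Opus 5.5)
The plan is to reduce the statement to \cref{prop:generator_tilting} by using $W^u$ itself, suitably shifted, as a generator of $Y$. By \cref{cor:extremal_functions_are_generators}, $U' := W^u + c\onevec$ with $c = \log\E\lr{\exp\lr{u\T U}}$ is a generator of $Y$.

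Applying \cref{prop:generator_tilting} with this generator gives $W^v \deq P_v U'^v$, where $U'^v$ is $U'$ exponentially tilted by $\exp\lr{v\T U'}$. Since $\onevec\T v = 1$, we have $v\T U' = v\T W^u + c$, so the constant $e^{c}$ cancels between numerator and normalizer. Likewise $P_v\onevec = \zerovec$ gives $P_v U' = P_v W^u$. Hence
\begin{equation*}
    \P(W^v \in A^v) = \frac{\E\lr{\exp\lr{v\T W^u}\indicator{P_v W^u \in A^v}}}{\E\lr{\exp\lr{v\T W^u}}}.
\end{equation*}

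It remains to identify the events $\set{P_v W^u \in A^v}$ and $\set{W^u \in A^u}$ almost surely. By \cref{prop:stochrep_halfspace}, $W^u \in u\orth$ almost surely. I will check that $P_v$ restricted to $u\orth$ and $P_u$ restricted to $v\orth$ are mutually inverse linear bijections between these hyperplanes. For $w \in u\orth$ we have $P_u w = w$, and the identity $P_u P_v = P_u$ from \cref{sec:linear_algebra} gives $P_u P_v w = P_u w = w$; the other composition is symmetric. Consequently, for $w \in u\orth$, $P_v w \in A^v$ holds if and only if $w = P_u P_v w \in P_u A^v = A^u$. This yields the first claim.

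For the density, I will write $W^v = P_v \tilde W$, where $\tilde W$ lives on $u\orth$ with density proportional to $\exp\lr{v\T w} f_{W^u}(w)$. I then change variables under the linear bijection $P_v : u\orth \to v\orth$, whose inverse is $P_u$. This gives
\begin{equation*}
    f_{W^v}(y) = J\,\exp\lr{v\T P_u y}\, f_{W^u}(P_u y)\,/\,\E\lr{\exp\lr{v\T W^u}},
\end{equation*}
where $J$ is the $(d-1)$-dimensional volume distortion factor of $P_u|_{v\orth}$.

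The exponent simplifies as follows: for $y \in v\orth$, $v\T P_u y = v\T y - (v\T\onevec)(u\T y) = -u\T y$. The main obstacle is the Jacobian, i.e.\ showing $J = \norm{u}/\norm{v}$. My plan is a cylinder argument. Take a set $B \subset v\orth$ with $(d-1)$-volume $\mathrm{vol}_v(B)$ and form $B + [0,1]\onevec$. Its $d$-volume is $\mathrm{vol}_v(B)\cdot \abs{\onevec\T v}/\norm{v}$, the height of $\onevec$ normal to $v\orth$. Projecting along $\onevec$ onto $u\orth$ maps this cylinder to $P_u B + [0,1]\onevec$ by a shear, which preserves $d$-volume. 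Since $\onevec\T u = \onevec\T v = 1$, this gives $\mathrm{vol}_u(P_u B)/\norm{u} = \mathrm{vol}_v(B)/\norm{v}$, hence $J = \norm{u}/\norm{v}$. Combining these pieces gives the stated density.
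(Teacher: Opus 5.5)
Your proof is correct. The density half follows the paper: tilt $W^u$ by $v$, push forward under $P_v\colon u\orth\to v\orth$ (inverse $P_u$), and obtain the Jacobian $\norm{u}/\norm{v}$ by comparing prism volumes. Your shear, explicitly $x\mapsto x+\onevec(v-u)\T x$ with determinant $1+(v-u)\T\onevec=1$, is the same observation as the paper's $\det{\brackets{P_vA_u,\onevec}}=\det{\brackets{A_u,\onevec}}$ in \cref{lemma:hyperplanes_jacobian}.

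The first identity you reach by a different route. The paper keeps the original generator $U$ and uses that exponential tilts compose, so $U^v$ is $U^u$ tilted by $v-u$. It then pulls $A^v$ back to the $d$-dimensional cylinder $A^v+\onevec\R=P_u\inv(A^u)=P_v\inv(A^v)$, and converts back with $v\T W^u=(v-u)\T U^u$. You instead use \cref{cor:extremal_functions_are_generators} to take $W^u+c\onevec$ itself as the generator in \cref{prop:generator_tilting}. The shift then cancels via $v\T\onevec=1$ and $P_v\onevec=\zerovec$, and only the bijection between the hyperplanes remains. This is not circular, since the corollary rests only on \cref{prop:generator_tilting}.

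What your route buys: it stays on the hyperplanes, needs no tilt composition, and gives the normalizing constant for free. Applying \cref{eq:proportionality_constant} to the new generator yields $\Lambda(\Hcal^v)=\Lambda(\Hcal^u)\,\E\lr{\exp\lr{v\T W^u}}$, i.e.\ $\E\lr{\exp\lr{v\T W^u}}=\E\lr{\exp\lr{v\T U}}/\E\lr{\exp\lr{u\T U}}$. The paper rederives this separately in the proof of \cref{prop:density_Yv}. What the paper's route buys: everything is expressed through one fixed generator $U$, which is the viewpoint that later proof relies on.
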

\begin{remark}
    \label{remark:hyperplaneMeasures}
    Throughout the paper,
    we consider densities of random vectors supported on hyperplanes $v\orth$
    with respect to the $(d-1)$-dimensional Hausdorff measure on these hyperplanes,
    that is,
    the pushforward measure obtained by
    the $(d-1)$-dimensional Lebesgue measure on $\Rd[d-1]$
    under any orthonormal transformation from $\Rd[d-1]$ to $v\orth$.
    See \cref{sec:density_details} for more details.
\end{remark}

Since $Y^v$ is the sum of independent random vectors $W^v$ and $E\onevec$,
which are supported on complementary
subspaces,
its density can be computed as the product of the densities of $W^v$ and $E\onevec$,
correcting for the Jacobian of the transformation.

\begin{proposition}
    \provenlabel{prop:density_Yv}
    If $W^v$ has density $f_{W^v}$ on $v\orth$,
    then the density of $Y^v$ is given by
    \begin{equation*}
        f_{Y^v}(y)
        =
        f_{W^v}(P_v y)
        \norm{v}
        \exp\lr{-v\T y}
        , \quad
        y \in \Hcal^v
        .
    \end{equation*}
\end{proposition}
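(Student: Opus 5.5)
We use the stochastic representation $Y^v \deq W^v + E\onevec$ from \cref{prop:stochrep_halfspace}, with $E$ independent of $W^v$ and $W^v \in v\orth$ almost surely. Since $\onevec\T v = 1 \neq 0$, we have $\Rd = v\orth \oplus \laspan\set{\onevec}$. Hence every $y \in \Rd$ decomposes uniquely as $y = P_v y + (v\T y)\onevec$. Indeed, $P_v y = y - \onevec v\T y$ lies in $v\orth$ because $v\T \onevec = 1$.

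Consider the map $\phi: v\orth \times \R \to \Rd$, $\phi(w, t) = w + t\onevec$. Its inverse is $y \mapsto (P_v y, v\T y)$. The pair $(W^v, E)$ has density $f_{W^v}(w)\, e^{-t}\indicator{t>0}$ with respect to the product of the $(d-1)$-dimensional Hausdorff measure on $v\orth$ (\cref{remark:hyperplaneMeasures}) and Lebesgue measure on $\R$. By the change of variables formula,
\[
f_{Y^v}(y) = f_{W^v}(P_v y)\, e^{-v\T y}\indicator{v\T y>0}\, \abs{J}^{-1},
\]
where $\abs{J}$ is the volume distortion factor of $\phi$. The indicator is exactly the condition $y \in \Hcal^v$.

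The main step is computing $\abs{J}$. Pick an orthonormal basis $b_1, \dots, b_{d-1}$ of $v\orth$ and set $n = v/\norm{v}$, the unit normal. In the coordinates $(b_1, \dots, b_{d-1})$ for $v\orth$ and $t$ for $\R$, the map $\phi$ has matrix $[b_1, \dots, b_{d-1}, \onevec]$. Its absolute determinant equals the volume of the parallelepiped spanned by these vectors. Since the $b_i$ span $v\orth$ orthonormally, this volume is $\abs{n\T \onevec}$, the component of $\onevec$ orthogonal to $v\orth$. That component is $\abs{v\T\onevec}/\norm{v} = 1/\norm{v}$. Thus $\abs{J} = 1/\norm{v}$, and the density picks up the factor $\abs{J}^{-1} = \norm{v}$, giving the claimed formula.

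As a sanity check, for $v = \kev{m}$ we get $\norm{v} = 1$, which matches the classical representation. The only delicate points are keeping the Hausdorff-measure convention consistent and confirming the determinant identity via the orthogonal decomposition $\onevec = (\onevec - (n\T\onevec)n) + (n\T\onevec)n$. The first summand lies in $v\orth$ and therefore does not contribute to the volume.
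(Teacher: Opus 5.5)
Your argument is correct, but it takes a more direct route than the paper. The paper never does the change of variables for a general $v$. Instead it proceeds in three steps:
\begin{enumerate}
\item It shows that if the formula holds for one $v$, it holds for every $u\in\Delta_{d-1}$. This combines the density ratio $f_{Y^u}/f_{Y^v}$ from \cref{cor:density_Yv_from_lambda} with the tilting relation between $f_{W^u}$ and $f_{W^v}$ from \cref{corr:extremal_function_tilting}. The constants cancel because $\E(\exp(v\T W^u))=\E(\exp(v\T U))/\E(\exp(u\T U))$.
\item It extends the identity from $\Hcal^u\cap\Hcal^v$ to all of $\Hcal^u$ by homogeneity along $\onevec$.
\item It checks the base case $v=\dinvvec$. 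There $\onevec\perp v\orth$, so the two densities simply multiply.
\end{enumerate}

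You instead treat every $v$ at once via the linear isomorphism $(w,t)\mapsto w+t\onevec$. Your observation that an orthonormal basis of $v\orth$ together with $\onevec$ spans a parallelepiped of volume $1/\norm{v}$ is exactly the identity $V(A_u)=\norm{u}\,\det{\brackets{A_u,\onevec}}$ at the core of \cref{lemma:hyperplanes_jacobian}. You use it once, rather than routing it through the tilting lemma.

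Your route is shorter and needs only the independence in \cref{prop:stochrep_halfspace}. It never passes through the exponent measure density $\lambda$, which \cref{cor:density_Yv_from_lambda} assumes exists. It also needs no homogeneity extension and no separate base case. It applies verbatim to any $v$ with $\onevec\T v=1$ for which the representation $Y^v\deq W^v+E\onevec$ holds.

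The paper's route ties the formula into the tilting relations between different $v$-extremal functions, making its consistency across $v$ explicit. It saves no work, though, since the tilting lemma already relies on the same Jacobian computation.

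At $v=\dinvvec$ your factor is $\norm{\dinvvec}=d^{-1/2}$, which is the correct value. The paper's base case loosely calls this Jacobian $d^{-1}$.
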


\subsection{The \texorpdfstring{$v$}{v}-variogram}

The variogram is an important summary statistic of a random vector.
We introduce the $v$-variogram that measures dependence in the vector $Y^v$.

\begin{definition}
    \label{def:v_variogram}
    For a \multGP{} random vector $Y$ and a vector $v \in \Delta_{d-1}$,
    writing $Y^v = Y \mid \set{v\T Y > 0}$,
    we define the $v$-variogram $\Gamma^v$ as the $d \times d$-matrix with entries
    \begin{equation*}
        \Gamma_{ij}^{v}
        =
        \Var\lr{Y_i^v - Y_j^v}
        ,
    \end{equation*}
    provided the variances exist and are finite.
\end{definition}

Since we have the relations $P_v Y^v \deq W^v \deq P_v U^v$,
the random vectors $Y^v$, $W^v$, and $U^v$ differ in distribution only
by an additive (random) value along the $\onevec$-direction,
which cancels out in the computation of the $v$-variogram $\Gamma^v$,
yielding the following result.

\begin{corollary}\label{lem:variogram_formulas}
    \provenlabel{cor:v_variogram_expression}
    For $Y^v$, $W^v$ and $U^v$ as above,
    we have
    \begin{align*}
        \Gamma_{ij}^{v}
        &=
        \Var\lr{Y_i^v - Y_j^v}
        \\ &=
        \Var\lr{W_i^v - W_j^v}
        \\ &=
        \Var\lr{U_i^v - U_j^v}
        .
    \end{align*}
\end{corollary}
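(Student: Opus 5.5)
The key observation is that $x_i - x_j = (\kev{i} - \kev{j})\T x$, and that $\kev{i} - \kev{j}$ annihilates the $\onevec$-direction. Hence, for any random vector $Z$ and any scalar random variable $S$,
\[ (Z + S\onevec)_i - (Z + S\onevec)_j = Z_i - Z_j \]
holds pointwise. We will combine this with the distributional identities already established and with the algebraic form of the projection $P_v = \Id - \onevec v\T$, valid since $\onevec\T v = 1$ for $v \in \Delta_{d-1}$.

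\textbf{First equality.} This is simply \cref{def:v_variogram}.

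\textbf{Second equality.} We use \cref{prop:stochrep_halfspace}, which gives $W^v \deq P_v Y^v$. Writing $P_v Y^v = Y^v - (v\T Y^v)\onevec$, the observation above yields
\[ (P_v Y^v)_i - (P_v Y^v)_j = Y_i^v - Y_j^v \]
almost surely. The difference $W_i^v - W_j^v$ is a measurable function of $W^v$, so it has the same law as $(P_v Y^v)_i - (P_v Y^v)_j$, which equals $Y_i^v - Y_j^v$ almost surely. Therefore the two variances coincide. They are simultaneously finite or infinite, so the statement holds whenever $\Gamma^v$ is defined. (Alternatively, one could use $Y^v \deq W^v + E\onevec$ directly, without even needing the independence of $E$.)

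\textbf{Third equality.} By \cref{prop:generator_tilting}, $W^v \deq P_v U^v$, and $P_v U^v = U^v - (v\T U^v)\onevec$. The same argument gives $W_i^v - W_j^v \deq U_i^v - U_j^v$, and hence equal variances.

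There is no real obstacle in this argument. The only point needing care is bookkeeping: the equalities of variances rest on equality in distribution of the scalar differences, obtained as images of the vector identities under the measurable map $x \mapsto x_i - x_j$. Finiteness transfers automatically along these equalities in law.
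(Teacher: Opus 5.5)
Your proof is correct and takes essentially the paper's approach: the three vectors differ in law only by a random shift along $\onevec$, which the map $x \mapsto x_i - x_j$ annihilates. The only difference is bookkeeping: you use the stated identities $W^v \deq P_v Y^v$ and $W^v \deq P_v U^v$ directly, while the paper writes $Y^v \deq W^v + \onevec E \deq U^v + \onevec(E - v\T U^v)$.
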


Since many models are given through simple forms of their generators $U$,
\cref{cor:v_variogram_expression} together with \cref{prop:generator_tilting}
allow us to compute $\Gamma^v$ for the three examples introduced in \cref{sec:MPD}.
Let $\psi^{(1)}$ denote the trigamma function.

\begin{example}
    \provenlabel{ex:variogram_logistic}
    For the logistic model (\cref{{ex:definitionLog}}) with parameter $\theta \in (0,1)$,
    we have for $i \neq j$
    \begin{equation*}
        \Gamma_{ij}^v
        =
        \theta^2
        \psi^{(1)}\lr{1 - v_i \theta}
        +
        \theta^2
        \psi^{(1)}\lr{1 - v_j \theta}
        .
    \end{equation*}
\end{example}

\begin{example}
    \provenlabel{ex:variogram_dirichlet}
    For the Dirichlet model (\cref{ex:definitionDirichlet}) with parameters $\alpha_1, \dots, \alpha_d > 0$,
    we have for $i \neq j$
    \begin{equation*}
        \Gamma_{ij}^{v}
        =
        \psi^{(1)}(\alpha_i + v_i)
        +
        \psi^{(1)}(\alpha_j + v_j)
        .
    \end{equation*}
\end{example}

Note that for the $m$-th canonical unit vector $v = e_m$,
\cref{ex:variogram_logistic,ex:variogram_dirichlet} recover Examples~2 and~3 from \cite{engelkeVolgushev2022}.

\begin{example}\provenlabel{ex:HR}
    For a covariance matrix $\Sigma$,
    let $U \sim \normal\lr{\mu, \Sigma}$,
    with $\mu = -\half d_\Sigma$ and
    $d_\Sigma$ denoting the vector of diagonal entries of $\Sigma$.
    This generator yields the \HR{} generalized Pareto distribution in \cref{ex:definitionHR} parametrized by variogram
    $\Gamma = d_\Sigma\onevec\T + \onevec d_\Sigma\T - 2\Sigma$.
    Since both exponential tilting and projection preserve Gaussianity,
    the $v$-extremal functions can be computed to be
    \begin{equation}
        \label{eq:HR_extremal_function}
        W^v
        \deq
        P_v U^v
        \sim
        \normal\lr{\mu_v, \Sigma^v}
        ,
    \end{equation}
    with mean and covariance uniquely determined by $\Gamma$ as
    \begin{equation}
    \label{eq:muvSigv}
        \mu_v
        =
        P_v \mu + P_v \Sigma v
        =
        -\half
        P_v \Gamma v
        , \qquad
        \Sigma^v
        =
        P_v \Sigma P_v\T
        =
        P_v (-\half\Gamma) P_v\T
        .
    \end{equation}
    Since the variogram only depends on the covariance matrix and is invariant to shifts along the $\onevec$-direction,
    we find that the $v$-variogram $\Gamma^v$ is independent of $v$ and coincides with the parameter matrix $\Gamma$.
    By \cref{lem:variogram_formulas} we have
    \begin{equation*}
        \Gamma_{ij}^v
        =
        \Var(U_i^v - U_j^v)
        =
        \Sigma^v_{ii} + \Sigma^v_{jj} - 2\Sigma^v_{ij}
        =
        \Gamma_{ij}
        .
    \end{equation*}
\end{example}

\section{\HR{} density via noncanonical half-spaces}
\label{sec:density}

Among multivariate generalized Pareto distributions,
the parametric \HR{} family introduced in \cref{ex:definitionHR} has many special properties that are interesting for statistical modeling and inference.
For example, \HR{} models are parameterized by a variogram matrix $\Gamma$, which coincides with their $v$-variogram matrices for any $v\in \Lambda_d$.
Furthermore, the \HR{} precision matrix $\Theta$ permits parsimonious graphical modeling in extremes, which has inspired a recent series of research papers; see \citet{engelke2024} for an overview.

Throughout this section, let $Y$ be a \HR{} random vector with variogram matrix $\Gamma$.
As in \cref{ex:definitionHR},
we write $\Sigma = P_{\onevec} (-\half \Gamma) P_{\onevec}$ and
$\Theta = \Sigma\pinv$.
Again consider vectors $u,v \in \Delta_{d-1}$ and let related objects
such as $Y^u$ and $W^v$ be defined as in \cref{sec:stoch_rep}.

\subsection{Exponent measure density expression}

Using \cref{prop:density_Yv} and the density of the degenerate multivariate normal distribution in \cref{eq:HR_extremal_function},
we can directly express the density of $Y^v$ as
\begin{equation}
    \label{eq:density_HR_Yv}
    f_{Y^v}(y)
    =
    \sqrt{(2\pi)^{-(d-1)}\pdet{\Sigma^v}\inv}
    \exp\lr{
        -\half\norm[(\Sigma^v)\pinv]{P_v y-\mu_v}^2
    }
    \norm{v}
    \exp\lr{-v\T y}
    ,
\end{equation}
with $\Sigma^v$ and $\mu_v$ as in \cref{ex:HR},
$\norm[A]{y}^2 := y\T A y$,
and
$\pdet{\cdot}$ denoting the pseudo-determinant,
that is,
the product of the non-zero eigenvalues.
See \cref{lemma:hyperplane_normal_density} for a derivation of the
density of the degenerate normal distribution on the hyperplane $v\orth$.
However, in order to express the exponent measure density $\lambda$ similarly,
it remains to compute the normalization constant $\Lambda\lr{\Hcal^v}$.
Furthermore, it turns out that the projection $P_v y$ and subsequent multiplication with $(\Sigma^v)\pinv$ in the quadratic form
naturally simplifies to multiplication with $\Theta$,
and the pseudo-determinant $\pdet{\Sigma^v}$ can be expressed in terms of $\pdet{\Theta}$.
In the following \namecref{lemma:integralHyper},
we provide a closed formula for $\Lambda(\Hcal^v)$
for general $v\in \Rd$ satisfying $\onevec\T v=1$.
For $v\in\Delta_{d-1}$, this gives the normalizing constant of the probability density in \cref{cor:density_Yv_from_lambda} for the \HR{} distribution.
\begin{lemma}
    \provenlabel{lemma:integralHyper}
    Let $Y$ follow a \HR{} distribution with variogram matrix $\Gamma$ and let $v\in \Rd$ with $\onevec\T v=1$.
    Then
    \begin{equation*}
        \Lambda\lr{\Hcal^v}
        =
        \E\lr{\exp\lr{v\T U}}
        =
        \exp\lr{
            -\tfrac{1}{4}v\T\Gamma v
        }
        .
    \end{equation*}
\end{lemma}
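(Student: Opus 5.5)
We will compute $\Lambda(\Hcal^v)$ directly from the representation \eqref{eq:exponent_measure} of the exponent measure, using the Gaussian generator $U\sim\normal(\mu,\Sigma)$ with $\mu=-\half d_\Sigma$ from \cref{ex:HR}. The computation has three steps: integrate out the shift variable, evaluate a Gaussian moment generating function, and rewrite the exponent in terms of $\Gamma$.

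\emph{Reduction to a moment generating function.} Since $\onevec\T v=1$, the event $U+\onevec x\in\Hcal^v$ is the event $x>-v\T U$. Fubini then gives
\begin{equation*}
    \Lambda(\Hcal^v)
    =\E\Big(\int_{-v\T U}^\infty e^{-x}\,dx\Big)
    =\E\lr{\exp\lr{v\T U}}.
\end{equation*}
This step is valid for any $v$ with $\onevec\T v=1$, including those with negative entries, because the integrand is nonnegative. It is the same identity already used in the proof of \cref{prop:generator_tilting}. The value does not depend on which generator is chosen, since $\Lambda$ itself does not.

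\emph{Gaussian evaluation.} The scalar $v\T U$ is normal with mean $v\T\mu=-\half v\T d_\Sigma$ and variance $v\T\Sigma v$. Hence
\begin{equation*}
    \E\exp(v\T U)=\exp\lr{-\half v\T d_\Sigma+\half v\T\Sigma v}.
\end{equation*}

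\emph{Rewriting in terms of $\Gamma$.} Substitute the relation $\Gamma=d_\Sigma\onevec\T+\onevec d_\Sigma\T-2\Sigma$ and use $\onevec\T v=1$:
\begin{equation*}
    v\T\Gamma v=2\,v\T d_\Sigma-2\,v\T\Sigma v.
\end{equation*}
Therefore $-\tfrac14 v\T\Gamma v=-\half v\T d_\Sigma+\half v\T\Sigma v$, which matches the exponent above and proves the claim.

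\emph{Main obstacle.} The only delicate point is justifying that a Gaussian generator with $\Gamma=d_\Sigma\onevec\T+\onevec d_\Sigma\T-2\Sigma$ exists for the given $\Gamma\in\Dcal$. One concrete choice is $\Sigma=P_{\onevec}(-\half\Gamma)P_{\onevec}$, which is positive semidefinite because $\Gamma$ is conditionally negative definite; a direct check shows it reproduces $\Gamma$. The normalization $\E e^{U_i}=1$ holds because $\mu_i=-\half\Sigma_{ii}$. That this generator induces the \HR{} exponent measure of \cref{ex:definitionHR} is the standard fact recorded in \cref{ex:HR}, and we take it as given.
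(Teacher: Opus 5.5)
Your proof is correct and follows essentially the same route as the paper. Both take a Gaussian generator $U\sim\normal(-\half d_\Sigma,\Sigma)$ with variogram $\Gamma$, apply the normal moment generating function, and use the expansion $v\T\Gamma v = 2v\T d_\Sigma - 2v\T\Sigma v$ that follows from $\onevec\T v=1$.

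The one small difference is that you derive $\Lambda(\Hcal^v)=\E(\exp(v\T U))$ directly with Tonelli. The paper instead imports this identity from the proof of \cref{prop:generator_tilting}, where it is stated only for $v\in\Delta_{d-1}$. Your derivation makes explicit that the identity holds for every $v$ with $\onevec\T v=1$, which is the generality the lemma claims.
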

For canonical unit vectors $v=e_m$, we recover the identity $\Lambda\lr{\Hcal^{e_m}}=1$.
For noncanonical $v \in \Delta_{d-1}$,
the value of $\Lambda\lr{\Hcal^v}$ is strictly smaller than one,
since $\Gamma$ has strictly positive entries in off-diagonal entries.
For $v=\dinvvec$, the exponent is proportional to the Kirchhoff index $\frac{1}{2}\onevec\T \Gamma \onevec = d\, \trace(\Sigma)$,
a scalar graph invariant, which is relevant for example in chemical graph theory \citep{devriendt2022a}.

As shown in \cref{ex:HR}, the extremal functions $W^v$
have equivalent covariance matrices in the sense that they are all projections $\Sigma^v = P_v\T(-\half\Gamma) P_v$ of the same matrix.
In order to relate their pseudo-determinants to each other and simplify the corresponding density expressions,
we give the following result.
\begin{lemma}
    \provenlabel{lem:pseudoDeterminants}
    For vectors $u,v$ satisfying $\onevec\T u = \onevec\T v = 1$,
    the pseudo-determinants of $\Sigma^u, \Sigma^v$ from \cref{ex:HR} satisfy
    \begin{equation*}
        \frac{
            \pdet{\Sigma^{u}}
        }{
            \pdet{\Sigma^{v}}
        }
        =
        \frac{\norm{u}^2}{\norm{v}^2}
        .
    \end{equation*}
\end{lemma}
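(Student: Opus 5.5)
The plan is to reduce both pseudo-determinants to ordinary determinants on the hyperplanes $u\orth$ and $v\orth$. Their ratio should then be the squared volume distortion of the oblique projection $P_v$, restricted to $u\orth$, as a map onto $v\orth$.

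\emph{Step 1: kernels and images.} By \cref{ex:HR}, $\Sigma^v = P_v(-\half\Gamma)P_v\TT$. Since $\Sigma = P_{\onevec}(-\half\Gamma)P_{\onevec}$ and $P_vP_{\onevec}=P_v$, we can also write $\Sigma^v = P_v\Sigma P_v\TT$. For the same reason, $P_vP_u=P_v$ gives
\[ \Sigma^v = P_v \Sigma^u P_v\TT. \]
I will check that $\Sigma^v$ is symmetric \psd{} of rank $d-1$, with image $v\orth$ and kernel $\laspan(v)$. The relevant facts are these. The image of $P_v$ is $v\orth$. The kernel of $P_v\TT = \Id - v\onevec\T$ is $\laspan(v)$, and its image is $\onevec\orth$. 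Finally, $\Sigma$ is \pd{} on $\onevec\orth$, because $\Gamma$ is \cnd{}. The same statements hold for $u$.

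\emph{Step 2: pseudo-determinants as determinants.} Let $Q_v\in\Rd[d\times(d-1)]$ be an orthonormal basis of $v\orth$, and define $Q_u$ for $u\orth$ in the same way. Then $\pdet{\Sigma^v}=\det{Q_v\TT\Sigma^vQ_v}$. The image of $\Sigma^u$ is $u\orth$, so $\Sigma^u = Q_uQ_u\TT\Sigma^uQ_uQ_u\TT$. Substituting this into the relation from Step 1 gives
\[ Q_v\TT\Sigma^vQ_v = M\,(Q_u\TT\Sigma^uQ_u)\,M\TT, \qquad M = Q_v\TT P_v Q_u. \]
Taking determinants, $\pdet{\Sigma^v}/\pdet{\Sigma^u} = \det{M}^2$. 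If this is computed to be $\norm{v}^2/\norm{u}^2$, inverting the ratio gives exactly the stated identity for $\pdet{\Sigma^u}/\pdet{\Sigma^v}$.

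\emph{Step 3: volume factor of the oblique projection (main step).} I will show that the map $P_v|_{u\orth}:u\orth\to v\orth$ scales $(d-1)$-volume by exactly $\norm{v}/\norm{u}$. I plan a Cavalieri/cylinder argument. Take a set $B\subset u\orth$ of $(d-1)$-volume $A_u$, and form the prism $\setm{b+t\onevec}{b\in B,\ t\in[0,1]}$. Its $d$-volume is $A_u$ times the distance $\abs{u\T\onevec}/\norm{u}=1/\norm{u}$ between its two bases. The map $P_v$ slides each point along $\onevec$, so it sends $B$ to $B'=P_vB\subset v\orth$. The prism over $B'$ is obtained from the prism over $B$ by a shear along $\onevec$, so the two prisms have the same volume. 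The volume of the prism over $B'$ is $A_v/\norm{v}$, where $A_v$ is the volume of $B'$. Hence $A_u/\norm{u} = A_v/\norm{v}$, which gives $\det{M}=\norm{v}/\norm{u}$.

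The step needing most care is Step 3. An alternative is a direct computation of $\det{M}$: use $\det{\lr{Q_v\ \ v/\norm{v}}}$ and the relation $\lr{P_vQ_u\ \ \onevec}=\lr{Q_u\ \ \onevec}$ modulo column operations, then compare the two full determinants $\det{\lr{Q_u\ \ \onevec}}=\pm\,u\T\onevec/\norm{u}$ and $\det{\lr{Q_v\ \ \onevec}}=\pm1/\norm{v}$.
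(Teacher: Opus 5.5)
Your proof is correct, but it takes a different route from the paper's.

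\textbf{The paper's route.} The paper never compares $\Sigma^u$ and $\Sigma^v$ directly. It fixes the reference matrix $\Sigma = P_{\onevec}(-\half\Gamma)P_{\onevec}$ and writes $\pdet{\Sigma^v} = \pdet{P_v\TT P_v \Sigma}$; swapping the factors preserves the nonzero eigenvalues. It then diagonalizes $P_v\TT P_v = \Id - \dinvvec\onevec\T + d\delta\delta\T$ with $\delta = v - \dinvvec$. Its nonzero eigenvalues are $1+d\norm{\delta}^2 = d\norm{v}^2$ and $1$, the latter with multiplicity $d-2$. Pseudo-determinants are multiplicative for symmetric matrices sharing the kernel $\laspan(\onevec)$, so $\pdet{\Sigma^v} = d\norm{v}^2\pdet{\Sigma}$, and the ratio follows.

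\textbf{Your route.} You use the congruence $\Sigma^v = P_v\Sigma^u P_v\TT$ to show that the ratio equals $\det{M}^2$, the squared Jacobian of $P_v$ restricted to $u\orth$. You then compute that Jacobian by a volume argument. Your prism step is sound: the shear is $x \mapsto (\Id - \onevec v\T P_u)x$, which has determinant $1 - v\T P_u\onevec = 1$. Your alternative determinant computation is essentially the paper's own proof of \cref{lemma:hyperplanes_jacobian}.

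\textbf{What each route buys.} The paper's argument is a short, self-contained eigenvalue computation. It yields directly the explicit formula $\pdet{\Sigma^v} = d\norm{v}^2\pdet{\Sigma}$ used later in \cref{prop:HrDensity}. You recover that formula by taking $u=\dinvvec$, since $\Sigma^{\dinvvec}=\Sigma$. Your argument shows that this lemma is a direct consequence of the Jacobian lemma. It thereby explains why the same factor $\norm{v}/\norm{u}$ appears in the density transformations of \cref{corr:extremal_function_tilting} and \cref{prop:density_Yv}.
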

Plugging in $u = \dinvvec$,
this implies $\pdet{\Sigma^v}=d\norm{v}^2\pdet{\Sigma}$,
and using $u = e_m$
it recovers
$\det{\Sigma\slr{m}} = d\pdet{\Sigma}$
for the $(d-1)$-dimensional covariance matrices
$\Sigma\slr{m} = \Sigma^{e_m}_{\setminus{m},\setminus{m}}$
studied for example in \cite{engelke2020}.
Combining these results with those from \cref{sec:stoch_rep},
we derive the following closed form expressions for the densities $\lambda$ and $f_{Y^v}$.

\begin{proposition}
    \provenlabel{prop:HrDensity}
    Let $Y$ follow a \HR{} distribution with variogram matrix $\Gamma$ and $v \in \Delta_{d-1}$.
    The density of $Y^v$ for $y \in \Hcal^v$ can be expressed as
    \begin{equation}
        \label{eq:extremal_density_HR}
        f_{Y^v}(y)
        =
        \sqrt{d\inv(2\pi)^{-(d-1)}\pdet{\Theta}}
        \exp\lr{
            -\half\norm[\Theta]{y-\tilde\mu_v}^2
        }
        \exp\lr{-v\T y}
        ,
    \end{equation}
    with $\tilde\mu_v := P_{\onevec}\mu_v = P_{\onevec}\lr{-\half\Gamma}v$.
    For the exponent measure density $\lambda$, we have
    \begin{equation}
        \label{eq:HR2}
        \lambda(y)
        =
        \exp\lr{-\tfrac{1}{4}v\T\Gamma v}
        f_{Y^v}(y)
    \end{equation}
    for any $y \in \Rd$ and $v\in \Rd$ with $\onevec\T v = 1$,
    using the algebraic expression from \cref{eq:extremal_density_HR}
    for values of $y$ outside the support of $f_{Y^v}$
    or $v \notin \Delta_{d-1}$.
\end{proposition}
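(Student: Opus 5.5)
Starting point is \eqref{eq:density_HR_Yv}, obtained from \cref{prop:density_Yv} and the Gaussian law \eqref{eq:HR_extremal_function}. Three pieces need rewriting: the normalizing constant, the quadratic form, and then \eqref{eq:HR2}. For the constant, \cref{lem:pseudoDeterminants} with $u=\dinvvec$ gives $\pdet{\Sigma^v}=d\norm{v}^2\pdet{\Sigma}=d\norm{v}^2\pdet{\Theta}\inv$. Hence $\sqrt{\pdet{\Sigma^v}\inv}\,\norm{v}=\sqrt{d\inv\pdet{\Theta}}$, which matches the prefactor in \eqref{eq:extremal_density_HR}.

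The main step is the identity
\[
\norm[(\Sigma^v)\pinv]{P_v y-\mu_v}^2=\norm[\Theta]{y-\tilde\mu_v}^2 .
\]
Since $P_{\onevec}P_v=P_{\onevec}$ and $\Theta\onevec=\zerovec$, we have $\norm[\Theta]{x}^2=\norm[\Theta]{P_{\onevec}x}^2$, and the right-hand side equals $\norm[\Theta]{P_{\onevec}(P_vy-\mu_v)}^2$. It therefore suffices to show that $(\Sigma^v)\pinv=P_v\T\Theta P_v$ holds as a quadratic form on $v\orth$, or more directly that $x\T(\Sigma^v)\pinv x=x\T\Theta x$ for $x\in v\orth$.

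To prove this, I use $P_vP_{\onevec}=P_v$ and $P_{\onevec}P_v=P_{\onevec}$, so that $P_v$ and $P_{\onevec}$ restrict to mutually inverse bijections between $\onevec\orth$ and $v\orth$. Then $\Sigma^v=P_v(-\half\Gamma)P_v\T=P_v\Sigma P_v\T$, where the middle equality uses $P_v=P_vP_{\onevec}$ and $\Sigma=P_{\onevec}(-\half\Gamma)P_{\onevec}\T$. Writing $Q=P_v|_{\onevec\orth}$, we get $\Sigma^v=Q\Sigma Q\T$ as maps from $v$-direction quotients onto $v\orth$. The vector $x=Q z$ with $z=P_{\onevec}x$ lies in the range of $\Sigma^v$, and I expect $x\T(\Sigma^v)\pinv x=z\T\Sigma\pinv z$; this is the usual invariance of the Mahalanobis form of a degenerate Gaussian under an injective linear map, here $P_v$ applied to $\normal(\cdot,\Sigma)$ on $\onevec\orth$.

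I regard this pseudo-inverse bookkeeping as the main obstacle. The safest route is probabilistic: if $Z\sim\normal(0,\Sigma)$ is supported on $\onevec\orth$, then $P_vZ\sim\normal(0,\Sigma^v)$, and both quadratic forms are the Mahalanobis form of the same Gaussian expressed in two coordinate systems related by the bijection $Q$. One caveat: $(\Sigma^v)\pinv$ is the Moore–Penrose inverse, whose kernel is $\onevec\orth{}$'s complement only orthogonally, so I only use it on $v\orth=\mathrm{range}(\Sigma^v)$, where it is the honest inverse.

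Finally, since $\mu_v=-\half P_v\Gamma v$, we get $\tilde\mu_v=P_{\onevec}\mu_v=P_{\onevec}(-\half\Gamma)v$, and $P_{\onevec}P_vy=P_{\onevec}y$. This completes \eqref{eq:extremal_density_HR}. For \eqref{eq:HR2}, \cref{cor:density_Yv_from_lambda} and \cref{lemma:integralHyper} give $\lambda=\exp(-\tfrac14 v\T\Gamma v)f_{Y^v}$ on $\Hcal^v$ for $v\in\Delta_{d-1}$.

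To extend this to all $y$ and all $v$ with $\onevec\T v=1$, I expand the right-hand side as $\exp(-\half y\T\Theta y+ y\T\Theta\tilde\mu_v - v\T y + c_v)$. Using $\Theta\Sigma=P_{\onevec}$ (valid on $\onevec\orth$), the linear term is $y\T\Theta P_{\onevec}(-\half\Gamma)v-v\T y=y\T P_{\onevec}(\ldots)$, and I will check directly that the $v$-dependence in the linear coefficient cancels. Concretely, $\Theta(-\half\Gamma)v$ plus the appropriate $\onevec$-terms reduce, via $-\half P_{\onevec}\Gamma P_{\onevec}=\Sigma$, to $r_\Theta-\dinvvec+v-v=r_\Theta-\dinvvec$ up to a multiple of $\onevec$ contributing a constant. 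The constant in turn must cancel against $-\tfrac14v\T\Gamma v$ together with $-\half\norm[\Theta]{\tilde\mu_v}^2$.

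This shows that the right-hand side of \eqref{eq:HR2} is an analytic expression in $(y,v)$ that is independent of $v$. Since it agrees with $\lambda$ for $v\in\Delta_{d-1}$, and in particular matches the form \eqref{eq:HR1}, the identity holds for all $y$ and all affine $v$.
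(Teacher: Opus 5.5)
Your proposal is correct. It shares its skeleton with the paper: start from \eqref{eq:density_HR_Yv}, get the prefactor from \cref{lem:pseudoDeterminants}, then apply \cref{cor:density_Yv_from_lambda,lemma:integralHyper}. Two steps, however, go a genuinely different way.

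For the quadratic form, the paper proves the full matrix identity $P_v\T(\Sigma^v)\pinv P_v=\Theta$. It does so by checking the characterization of $\Sigma\pinv$ in Lemma~S.1.4 of \cite{hentschel2023}, namely $\onevec\T B=\zerovec$ and $\Sigma B=P_{\onevec}$. You instead use invariance of the Mahalanobis form under the bijection $P_v\colon\onevec\orth\to v\orth$, whose inverse is $P_{\onevec}$. This is more conceptual and avoids the external lemma, but you only assert it. It closes in one line: for $x\in v\orth$ let $w=\Theta P_{\onevec}x$. Then $P_v\T w=w$ and $\Sigma^v w=P_v\Sigma w=P_vP_{\onevec}x=x$, whence $x\T(\Sigma^v)\pinv x=w\T\Sigma^v w=w\T x=x\T\Theta x$. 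Your aside on the kernel of $(\Sigma^v)\pinv$ is garbled; the kernel is simply $\laspan\set{v}$.

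For \eqref{eq:HR2}, the paper only invokes homogeneity. That extends the identity from $\Hcal^v$ to all $y$, but never really treats $v\notin\Delta_{d-1}$. Your direct proof that the right-hand side does not depend on $v$ covers that case and actually establishes the remark following the proposition. Two details need fixing:

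\begin{enumerate}
\item A $v$-dependent multiple of $\onevec$ in the linear coefficient would contribute $c_v\onevec\T y$, not a constant. Fortunately none occurs. With $\delta=v-\dinvvec\perp\onevec$ one has $\Theta(-\half\Gamma)\delta=\Theta\Sigma\delta=\delta$, hence $\Theta\tilde\mu_v-v=r_\Theta-\dinvvec$ exactly.
\item The constant should be computed rather than declared to cancel. Use $\tilde\mu_v=\tilde\mu_{\onevec}+\Sigma\delta$, $\delta\T\tilde\mu_{\onevec}=-\tfrac{1}{2d}\delta\T\Gamma\onevec$ and $\delta\T\Gamma\delta=-2\delta\T\Sigma\delta$. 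Then the $\delta$-terms of $-\half\norm[\Theta]{\tilde\mu_v}^2-\tfrac14 v\T\Gamma v$ are $\tfrac{1}{2d}\delta\T\Gamma\onevec-\half\delta\T\Sigma\delta-\tfrac{1}{2d}\delta\T\Gamma\onevec+\half\delta\T\Sigma\delta=0$.
\end{enumerate}

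With $v$-independence established, the right-hand side agrees with $\lambda$ on $\bigcup_{v\in\Delta_{d-1}}\Hcal^v=\Lcal$. Homogeneity, or matching with \eqref{eq:HR1}, then extends this to all $y\in\Rd$, as you indicate.
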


Note that the value of the right-hand side in \cref{eq:HR2} does not depend on $v \in \Rd$,
as long as $\onevec\T v = 1$,
which is why $v$ is omitted on the left-hand side.
Using the general expression of $f_{Y^v}$ in \cref{eq:density_HR_Yv} and \cref{prop:HrDensity},
we can recover some well-known density representations
arising from particular choices of $v$.

\begin{example}
    Let $v=e_m$ be a canonical unit vector and
    let $\mu_{e_m}$ and $\Sigma^{e_m}$ be as in \cref{eq:muvSigv}.
    Simplifying \cref{eq:density_HR_Yv} then yields
    \begin{equation*}
        f_{Y^{e_m}}(y)
        =
        \sqrt{
            (2\pi)^{-(d-1)}
            \pdet{\Sigma^{e_m}}\inv
        }
        \exp\lr{
            -\half\norm[\Theta_{e_m}]{y-\mu_{e_m}}^2
            - y_m
        }
        .
    \end{equation*}
\end{example}
Observing that $e_m\T \Gamma e_m = 0$,
\cref{eq:HR2} furthermore implies that the same expression gives the exponent measure density $\lambda(y)$.
After transforming to standard Pareto margins, this recovers the exponent measure density expression
used for example in expression~(9) of \cite{engelke2020}.
Setting $v = \dinvvec$ recovers a similar
expression to \cref{eq:HR1} and
Proposition~3.4 in \cite{hentschel2023}.
\begin{example}
    Let $v=\dinvvec$ and
    observe that $\mu_{\onevec}$ from \cref{eq:muvSigv} and $\tilde\mu_{\onevec}$
    from \cref{prop:HrDensity} coincide for this choice of $v$.
    Then \cref{eq:extremal_density_HR} yields
    \begin{equation*}
        f_{Y^{\onevec}}(y)
        =
        \sqrt{
            d\inv
            (2\pi)^{-(d-1)}
            \pdet{\Theta}
        }
        \exp\lr{
            -\half\norm[\Theta]{y-\mu_{\onevec}}^2
            - \dinvvec\T y
        }
        .
    \end{equation*}
\end{example}

\subsection{The resistance curvature vector}
\label{sec:special_halfspace}

In the exponent measure density expression in \cref{prop:HrDensity}, the vector $v\in\Rd$ can be any vector that satisfies $\onevec\T v=1$.
Standard choices include the canonical unit vectors and the vector $\dinvvec$.
In this section we will show that the resistance curvature vector $v_0$ introduced in \cref{eq:resistance} leads to a particularly elegant exponent measure density expression.
Our simulation study in \cref{sec:inferenceUsingV} gives evidence for interesting statistical properties for this choice.

Let $\Gamma\in \Dcal$ be a variogram
and define the matrix  $M_t = t\onevec\onevec\T - \half\Gamma$
for $t \in \R$.
By \citet[][Proposition~3.2]{hentschel2023},
$M_t$ is invertible for all $t \neq t_0$,
where $t_0 = \half(\onevec\T\Gamma\onevec)\inv$
is the resistance radius of $\Gamma$.
Furthermore, $M_t$ is connected to the \HR{} precision matrix $\Theta$
through the relationship
$\Theta = \lim_{t \to \infty} M_t\inv$.
We observe that the resistance curvature $v_0$ satisfies
\begin{equation*}
    M_{t_0}v_0
    =
    (t_0\onevec\onevec\T-\half \Gamma)v_0=t_0\onevec - t_0\onevec = \zerovec
    ,
\end{equation*}
implying that $v_0$ spans the one-dimensional kernel of $M_{t_0}$.
In the following result we find that the vector $v_0$, if chosen as the defining vector in \cref{prop:HrDensity}, leads to a particularly simple representation of the \HR{} exponent measure density.
It has already been observed to play a special role in optimal prediction
in \HR{} models where it naturally appears in the kriging formula \citep[][Proposition 5.4]{bolin2025}.
Furthermore, we obtain a particularly simple stochastic representation of the \HR{} distribution restricted to the half-space spanned by $v_0$.

\begin{corollary}\label{PROP:special_halfspace}
    \provenlabel{prop:HRdensityLimit}
    The exponent measure density $\lambda$
    of a \HR{} distribution with variogram $\Gamma$ and precision matrix $\Theta$
    can be expressed as
    \begin{equation*}
        \lambda(y)
        =
        \sqrt{
            d\inv
            \lr{2\pi}^{-(d-1)}\pdet{\Theta}
        }
        \cdot
        \exp\lr{-\half t_0}
        \cdot
        \exp\lr{
            -\half y\T \Theta y
            - v_0\T y
        }
        , \quad
        y \in \Rd
        .
    \end{equation*}
    If $v_0\in\Delta_{d-1}$ it holds that
    $W^{v_0}\sim \normal(\zerovec,\Sigma^{v_0})$
    with
    $\Sigma^{v_0} = t_0\onevec\onevec\T-\half\Gamma$.
\end{corollary}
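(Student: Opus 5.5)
The plan is to specialize \cref{prop:HrDensity} to $v = v_0$ and to use one identity throughout: $v_0$ spans the kernel of $M_{t_0}$, as observed just before the statement.

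First I rewrite that identity in a usable form. From $M_{t_0}v_0 = \zerovec$ and $\onevec\T v_0 = 1$ we get
\begin{equation*}
    -\half\Gamma v_0 = M_{t_0}v_0 - t_0\onevec\onevec\T v_0 = -t_0\onevec,
    \qquad\text{that is,}\qquad
    \Gamma v_0 = 2t_0\onevec .
\end{equation*}
This is also immediate from $v_0 = 2t_0\Gamma\inv\onevec$.

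Next I compute the two $v$-dependent quantities in \cref{prop:HrDensity} at $v = v_0$.
\begin{itemize}
    \item Shift vector: $\tilde\mu_{v_0} = P_{\onevec}(-\half\Gamma)v_0 = -t_0P_{\onevec}\onevec = \zerovec$, because $P_{\onevec}$ annihilates $\onevec$.
    \item Normalizing exponent: $v_0\T\Gamma v_0 = 2t_0\, v_0\T\onevec = 2t_0$, so $\exp(-\tfrac14 v_0\T\Gamma v_0) = \exp(-\half t_0)$.
\end{itemize}
\cref{prop:HrDensity} holds for any $v$ with $\onevec\T v = 1$, not only for $v \in \Delta_{d-1}$. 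Substituting $v = v_0$ into \cref{eq:extremal_density_HR} and \cref{eq:HR2} therefore gives the claimed formula for $\lambda(y)$ on all of $\Rd$. The only bookkeeping is that $\norm[\Theta]{y-\zerovec}^2 = y\T\Theta y$.

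For the second claim I assume $v_0 \in \Delta_{d-1}$ and use \cref{ex:HR}, which gives $W^{v_0} \sim \normal(\mu_{v_0}, \Sigma^{v_0})$.
\begin{itemize}
    \item Mean: $\mu_{v_0} = -\half P_{v_0}\Gamma v_0 = -t_0P_{v_0}\onevec = \zerovec$, since $P_v\onevec = \onevec - \onevec(v\T\onevec) = \zerovec$.
    \item Covariance: write $-\half\Gamma = M_{t_0} - t_0\onevec\onevec\T$. The rank-one term is killed because $P_{v_0}\onevec = \zerovec$, so $\Sigma^{v_0} = P_{v_0}M_{t_0}P_{v_0}\T$. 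Expanding $(\Id - \onevec v_0\T)M_{t_0}(\Id - v_0\onevec\T)$, every cross term contains $M_{t_0}v_0$ or $v_0\T M_{t_0}$, and both vanish since $M_{t_0}$ is symmetric. Hence $\Sigma^{v_0} = M_{t_0} = t_0\onevec\onevec\T - \half\Gamma$.
\end{itemize}

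I expect no real obstacle. The one point to watch is the definition of $t_0$. The consistent definition is the one in \cref{eq:resistance}, $t_0 = \half(\onevec\T\Gamma\inv\onevec)\inv$, which requires $\Gamma$ to be invertible, as it is for $\Gamma\in\Dcal$. With this $t_0$ we have $\onevec\T v_0 = 1$ and $M_{t_0}v_0 = \zerovec$, and these two facts are all the argument uses.
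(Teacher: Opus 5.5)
Your proposal is correct and follows essentially the same route as the paper: specialize \cref{prop:HrDensity} to $v=v_0$, use $\Gamma v_0 = 2t_0\onevec$ to get $\tilde\mu_{v_0}=\zerovec$ and $\tfrac14 v_0\T\Gamma v_0=\half t_0$, and expand $P_{v_0}(-\half\Gamma)P_{v_0}\T$ to obtain $t_0\onevec\onevec\T-\half\Gamma$. Your explicit check that $\mu_{v_0}=-\half P_{v_0}\Gamma v_0=\zerovec$ justifies the centering more directly than the paper, which only computes the $P_{\onevec}$-projected shift. You are also right to use $t_0=\half(\onevec\T\Gamma\inv\onevec)\inv$ from \cref{eq:resistance}; the expression with $\onevec\T\Gamma\onevec$ in \cref{sec:special_halfspace} is a typo.
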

In particular, this means that if $v_0 \in \Delta_{d-1}$, the $v_0$-extremal function $W^{v_0}$
is a centered Gaussian vector.
The representation of $\lambda$ allows rewriting the exponent into a quadratic form involving $t_0,v_0$ and $\Theta$ \citep[Example~5]{EGR2025}.
Since, by construction, $\Sigma^{v_0}$ has a constant diagonal,
it can be standardized to a correlation matrix.
This gives rise to a standardized variogram and an analog of the (Gaussian) elliptope, that is, the bounded set of all positive semi-definite correlation matrices, see \citet{DER2026}.
Furthermore, we find that the half-space spanned by $v_0$ has the smallest volume
with respect to the exponent measure among all vectors $v$ that sum up to one.
\begin{corollary}\label{cor:minimal_halfspace}
    \provenlabel{cor:minIntegral}
    For $v\in \Delta_{d-1}$ and $m=1,\dots,d$ we have
    \begin{equation*}
        \exp\lr{-\half t_0}
        =
        \Lambda\lr{\Hcal^{v_0}}
        \leq
        \Lambda\lr{\Hcal^v}
        \leq
        \Lambda\lr{\Hcal^{e_m}}
        =
        1
        .
    \end{equation*}
    Furthermore, the first inequality also holds for all $v\in\Rd$ with $\onevec\T v=1$ and is strict for $v \neq v_0$.
    The second inequality is strict for any $v \in \Delta_{d-1}$ that is not a canonical unit vector.
\end{corollary}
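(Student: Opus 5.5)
The plan is to reduce everything to \cref{lemma:integralHyper}, which gives $\Lambda(\Hcal^v)=\exp(-\tfrac14 v\T\Gamma v)$ for every $v$ with $\onevec\T v=1$. Since $x\mapsto\exp(-x/4)$ is strictly decreasing, the first inequality becomes the claim that $v_0$ uniquely maximizes the quadratic form $q(v)=v\T\Gamma v$ on the affine hyperplane $\setm{v}{\onevec\T v=1}$. The second inequality becomes the claim that $q(v)\ge 0$ on $\Delta_{d-1}$, with equality exactly at the canonical unit vectors.

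For the maximization I will write $v=v_0+w$ with $\onevec\T w=0$ and expand
\[
q(v)=q(v_0)+2w\T\Gamma v_0+w\T\Gamma w .
\]
By \eqref{eq:resistance}, $\Gamma v_0=2t_0\onevec$, so the cross term equals $4t_0\onevec\T w=0$. Since $\Gamma\in\Dcal$, we have $w\T\Gamma w<0$ for every nonzero $w\in\onevec\orth$. Hence $q(v)\le q(v_0)$, with equality iff $w=\zerovec$, which gives strictness for $v\neq v_0$.

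It remains to evaluate the maximum: $q(v_0)=v_0\T(2t_0\onevec)=2t_0$, so $\Lambda(\Hcal^{v_0})=\exp(-\tfrac12 t_0)$, as claimed. This step uses the invertibility of $\Gamma$, which is implicit in the definition of $v_0$ and $t_0$ and holds for conditionally negative definite $\Gamma$ with zero diagonal. I would cite this rather than reprove it.

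For the upper bound, take $v\in\Delta_{d-1}$. Then $q(v)=\sum_{i\ne j}v_iv_j\Gamma_{ij}\ge0$, because the diagonal of $\Gamma$ is zero and its off-diagonal entries are positive. Equality holds iff $v_iv_j=0$ for all $i\ne j$, that is, iff $v$ has a single nonzero entry, which must equal $1$. So $v$ is a canonical unit vector, and in that case $\Lambda(\Hcal^{e_m})=\exp(0)=1$. In every other case $q(v)>0$, and the inequality is strict.

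Finally, I must check that chaining the two inequalities is valid for $v\in\Delta_{d-1}$. This is automatic, because the first bound holds on the whole affine hyperplane and does not require $v_0\in\Delta_{d-1}$. There is no real obstacle in this argument. The only delicate points are the sign convention (maximizing $q$ corresponds to minimizing $\Lambda$) and the fact that the cross term vanishes because $\Gamma v_0$ is proportional to $\onevec$.
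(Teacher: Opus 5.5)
Your proof is correct. Your treatment of the second inequality matches the paper's: for $v\in\Delta_{d-1}$, $v\T\Gamma v=\sum_{i\neq j}v_iv_j\Gamma_{ij}\geq 0$, with equality exactly at the canonical unit vectors.

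For the first inequality your route differs. The paper gives no argument of its own there. It invokes \citet[Proposition~6.14]{devriendt2022a}, which says that $-\tfrac12 v\T\Gamma v$ is minimized over $\setm{v}{\onevec\T v=1}$ uniquely at the resistance curvature $v_0$, with minimum value $-t_0$. You prove this directly: you write $v=v_0+w$ with $w\in\onevec\orth$ and use only $\Gamma v_0=2t_0\onevec$, $\onevec\T v_0=1$, and strict conditional negative definiteness.

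Your route is self-contained and gives slightly more, namely the exact identity
\[
\Lambda\lr{\Hcal^v}=\exp\lr{-\tfrac12 t_0}\exp\lr{-\tfrac14 (v-v_0)\T\Gamma(v-v_0)}.
\]
So the excess half-space measure depends only on the squared distance from $v$ to $v_0$ in the norm induced by $-\Gamma$ on $\onevec\orth$. This also makes precise the sense in which $v_0^*$ in the subsequent remark is \emph{closest} to $v_0$: it is the projection of $v_0$ onto the simplex in that norm.

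The paper's citation is shorter and links the result to the interpretation of $t_0$ and $v_0$ as circumradius and curvature of the simplex. However, it outsources exactly the step you prove. Your version also keeps the monotone change of scale explicit; the paper's proof blurs it by writing $\Lambda\lr{\Hcal^v}=-\tfrac12 v\T\Gamma v$ where a log-scale quantity is meant.
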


This result has important consequences for statistical inference,
since it states that the half-space $\Hcal^{v_0}$ contains in expectation the smallest number of observations of $Y$.
For the empirical version $\hat\Gamma^v$ of $\Gamma^v$,
the variance of
$\hat \Gamma^{v_0}$ is thus expected to be
the largest across all $\hat \Gamma^v$ for $v\in\Delta_{d-1}$.
On the other hand, as we will see in the simulation study in \cref{sec:inferenceUsingV}, $\hat\Gamma^{v_0}$ often has a much smaller bias.

\begin{remark}
    In the context of statistical inference,
    we only consider
    vectors $v \in \Delta_{d-1}$,
    since only these define half-spaces $\Hcal^v$ contained in the support $\Lcal = \setm{x \in \Rd}{x \not\leq \zerovec}$
    of the \multGP{} distribution $Y$.
    In the case that $v_0$ has negative entries,
    we therefore consider an approximation $v_0^* \in \Delta_{d-1}$
    which is closest to $v_0$ in the sense that it defines a half-space $\Hcal^{v_0^*} \subseteq \Lcal$
    with a minimal value of the \HR{} exponent measure $\Lambda\lr{\Hcal^{v_0^*}}$, that is,
    \begin{equation}
        \label{eq:v0*}
        v_0^*
        =
        \argmin_{v \in \Delta_{d-1}}
        \Lambda\lr{\Hcal^v}
        =
        \argmin_{v \in \Delta_{d-1}}
        -v\T \Gamma v
        .
    \end{equation}
    The second equality follows from \cref{lemma:integralHyper}.
    Since $\Gamma$ is a \cnd{} matrix
    and $v$ is required to satisfy $\onevec\T v = 1$,
    this optimization problem is convex and thus efficiently solvable using standard solvers.
    By \cref{cor:minIntegral}, $v_0^*$ agrees with $v_0$ if the latter is non-negative.
\end{remark}

\section{Inference based on noncanonical half-spaces}
\label{sec:inferenceUsingV}

In this section we introduce an estimator for $v$-variograms and provide an extensive simulation study that illustrates its application and performance.

\subsection{The empirical \texorpdfstring{$v$}{v}-variogram}
\label{sec:empvario}
In the following definition we generalize the empirical variogram of \citet{engelkeVolgushev2022} to an estimator of the $v$-variogram as introduced in \cref{def:v_variogram}.
We further introduce an ensemble estimator for collections of vectors in $\Delta_{d-1}$.
\begin{definition}
    \label{def:empirical_variogram}
    Let $Y\slr{1},\ldots,Y\slr{n}$
    be i.i.d.~copies of a multivariate generalized Pareto vector $Y$.
    We define the empirical $v$-variogram as
    $\hat\Gamma^v \in \Rdd$ with entries
    \begin{align*}
        \hat\Gamma^v_{ij}
        &=
        \widehat{\Var}
        \lr{
            Y_i - Y_j \mid v\T Y > 0
        },
        \\ &=
        \frac{1}{\abs{\Jcal^v}-1}
        \sum_{k\in\Jcal^v}
        \lr{
            (Y\slr{k}_i - Y\slr{k}_j)
            -
            (\overline Y_i - \overline Y_j)
        }^2
        ,
    \end{align*}
    where
    $\Jcal^v = \setm{j = 1,\ldots,n}{v\T Y\slr{j} > 0}$,
    and
    $\overline Y_i = \abs{\Jcal^v}\inv \sum_{k\in\Jcal} Y\slr{k}_i$.
    For a finite, non-empty set of vectors $V \subset \Delta_{d-1}$,
    define the ensemble estimator
    \begin{equation}
        \label{eq:ensembleEstimator}
        \hat\Gamma^V
        =
        \frac{1}{\abs{V}}\sum_{v\in V} \hat\Gamma^v
        .
    \end{equation}
\end{definition}
Since $\hat\Gamma^v_{ij}$ is the sample version of the true variance $\Var(Y_i^v - Y_j^v)$,
we immediately obtain consistency and unbiasedness of these estimators as long as the variance $\Gamma^v_{ij}$ exists.

In practice, we cannot directly observe data from a multivariate generalized Pareto distribution.
Instead, we apply the following steps in order to obtain an approximate sample for the multivariate generalized Pareto distribution, see e.g.~\citet[Section~7.1]{roettger2023}.
Let $\tilde{X}$ be a data-generating process with continuous marginal cumulative distribution functions $F_1,\ldots,F_d$.
Then, the transformed random vector $X$ with $X_i=-\log(1-F_i(\tilde{X}_i))$ has standard exponential margins.
We assume that $X$ satisfies the limit \eqref{eq:mpd} for some multivariate generalized Pareto vector $Y$,
or, equivalently, that $X$ lies in the domain of attraction of $Y$.

Let $\tilde{x}\in \R^{n\times d}$ be a data matrix of $n$ i.i.d.~observations of $\tilde{X}$.
In order to obtain a matrix $x\in \R^{n\times d}$ with approximate observations of $X$, it is common in extremal dependence modeling to employ empirical cumulative distribution functions, see e.g.~\citet{ES2009}.
For each $i=1,\ldots,d$, let $\hat{F}_i$ be the empirical distribution function of
$\tilde{x}_{i1},\ldots,\tilde{x}_{in}$,
that is of the $i$-th column of $\tilde{x}$.
We can now obtain an approximate data matrix
$x\in\R^{n\times d}$
via transformations
$x_{ji}=-\log(1-\frac{n}{n+1}\hat{F}_i(\tilde{x}_{ji}))$
for all $j=1,\ldots,n$ and $i=1,\ldots,d$.
Finally, we threshold and standardize the observations in $x$ to construct an approximate sample for the limiting multivariate generalized Pareto vector $Y$.
To this end, we select a high quantile of the standard exponential distribution $q_p$ for $p \in (0,1)$ close to one, and define
\begin{equation*}
    y_j
    =
    x_j - q_p\onevec
    \quad \forall
    j \in \Ical
    =
    \setm{\ell = 1,\ldots,n}{\max{x_\ell}>q_p}
    .
\end{equation*}
The resulting data matrix $y \in \Rd[m\times d]$, for some (random) $ m = \abs{\Ical} \le n$,
is now considered as an approximate sample for the multivariate generalized Pareto vector $Y$.
This permits the approximate calculation of the empirical variogram $\hat\Gamma^v$ as in \Cref{def:empirical_variogram} for any $v\in \Delta_{d-1}$.

\subsection{Simulation setup}

In the remainder of this section we present the results of a simulation study to explore how the empirical $v$-variogram from \cref{def:empirical_variogram}
can be used for statistical inference and how the choice of $v$ affects the resulting estimator $\hat\Gamma^v$.
We focus on the \HR{} distribution,
for which all $v$-variograms $\Gamma^v$ coincide with the parameter matrix $\Gamma$,
independently of $v \in \Delta_{d-1}$.

For a given dimension $d$ and variogram $\Gamma$, we consider two distributions:
The limiting \multGP{} distribution with exponential margins as defined in \cref{ex:definitionHR},
and a distribution in the domain of attraction of this distribution in the sense of \cref{eq:mpd}, from which we obtain an approximate sample as described in \cref{sec:empvario} above.
There are many such distributions, and we use the max-stable \HR{} distribution
with the same variogram~$\Gamma$ since it is easy to simulate from.
For the limiting distribution, we directly apply the definition of $\hat\Gamma^v$ from \cref{def:empirical_variogram},
whereas for the distribution in the domain of attraction,
we apply the definition of the empirical variogram to the approximate sample obtained after thresholding and transformation.

Throughout, we consider different choices of the dimension~$d$,
number of observations~$n$, and threshold quantile~$p$ for the distribution in the domain of attraction.
To ensure comparability of the results,
the same randomly generated variogram $\Gamma$ is considered in all setups with the same dimension $d$.
In order to estimate the bias and variance of the estimators $\hat\Gamma^v$,
we generate $M=2000$ datasets for each setup.

The \texttt{R} package \texttt{graphicalExtremes} \citep{graphicalExtremes2024} is used to sample the variograms $\Gamma$,
generate data from the considered distributions, and perform the thresholding and transformation procedure.

\subsection{Two-dimensional case}

First, we perform a simulation study for the two-dimensional case.
This setup is particularly easy to visualize,
since any two-dimensional variogram $\Gamma$ is uniquely parametrized by its
off-diagonal entry $\gamma := \Gamma_{12} = \Gamma_{21}$,
and the allowed directions $v$ can be parametrized
by a single value $\eta \in \brackets{0, 1}$ as $v = (\eta, 1-\eta) \in \Delta_2$.
Due to symmetry, we have $v_0 = v_0^* = (0.5, 0.5)$ for any value of $\gamma$.
We consider regularly spaced values of $\eta \in \brackets{0, 1}$
and use $\gamma = 1.5$ as the true value for the variogram parameter.

\cref{figure:gammaEmp2d} illustrates the performance of the corresponding estimators $\hat\Gamma^v$.
Furthermore, we consider the ensemble estimator defined in \cref{eq:ensembleEstimator},
based on $V = \set{e_1, e_2}$,
equivalent to the empirical variogram from \cite{engelkeVolgushev2022},
and $V$ consisting of the evenly spaced vectors $v$ considered above.
\cref{table:VBE_joint} shows the biases, variances, and RMSEs
of the resulting estimators for $e_1$, $v_0$, and the two ensembles.

\begin{figure}
    \centering
    \resizebox{0.9\textwidth}{!}{%
        \inputtikzR{simulation2d/gammaEmp_rmpareto_estimate}
        \inputtikzR{simulation2d/gammaEmp_rmstable_estimate}
    }
    \caption{%
        Performance of estimators $\hat\Gamma^v$
        for different choices of $v$ and $M=2000$ dataset realizations,
        based on the limiting \HR{} distribution (left)
        as well as
        thresholded data in the domain of attraction (right).
        The true value of $\gamma = \Gamma_{12}$ is indicated by the
        blue line.
    }
    \label{figure:gammaEmp2d}
\end{figure}

\begin{table}
    \centering
    \resizebox{\textwidth}{!}{%
        
  \begin{tabular}{|l|rrrr|rrrr|}
      \hline
      Model & \multicolumn{4}{c|}{Limiting Model} & \multicolumn{4}{c|}{Domain of Attraction} \\
      \hline
      Vector(s) & $e_1$ & $v_0$ & $\{e_1, e_2\}$ & $\{v_1, \dots, v_{11}\}$ & $e_1$ & $v_0$ & $\{e_1, e_2\}$ & $\{v_1, \dots, v_{11}\}$ \\
      \hline
      Variance & 0.047 & 0.057 & 0.025 & 0.032 & 0.033 & 0.082 & 0.027 & 0.049 \\
      Bias & 0.004 & -0.004 & 0.000 & -0.003 & -0.380 & -0.140 & -0.382 & -0.241 \\
      RMSE & 0.216 & 0.239 & 0.157 & 0.178 & 0.421 & 0.319 & 0.416 & 0.327 \\
      \hline
  \end{tabular}

    }
    \caption{%
        Bias, variance, RMSE of different estimators.
        Sets of vectors denote ensemble estimators which average
        over the corresponding $v$-variograms.
    }
    \label{table:VBE_joint}
\end{table}

In the limiting model,
we observe that all estimators $\hat\Gamma^v$ are unbiased and differ only in their variance.
In fact, the different variances can be perfectly explained by the different half-space measures $\Lambda(\Hcal^v)$,
which are proportional to the expected number of observations used to compute $\hat\Gamma^v$
for each $v$.
For the ensemble estimators we observe a significant reduction in variance.

For the thresholded data in the domain of attraction, we observe a more complex picture.
Vectors near the unit vectors $e_m$ yield estimators with low variance but high bias,
whereas vectors near $v_0^*$ yield estimators with low bias and high variance.
Considering ensembles of vectors reduces the variance as expected.

This illustrates a classical issue in multivariate extreme value theory.
For data from a random vector $X$ in the domain of attraction of a multivariate generalized Pareto distribution $Y$,
components $X_j$ that do not exceed the threshold might still be far away (in distribution) from their limit.
One way to mitigate the resulting estimation bias is to perform censoring \citep[e.g.,][]{wadsworthTawn2014},
which however becomes computationally very costly even in moderate dimensions.
An intuitive explanation for the lower bias of the empirical variogram with vector $v_0$
is that it selects points in lower-density regions of the exponent measure,
which are therefore more extreme.

\subsection{Inference based on different half-spaces}

The simulation setup in higher dimensions is more complex and harder to visualize
than in the two-dimensional case.
We consider dimension $d=10$ and a fixed, randomly generated variogram $\Gamma \in \Rdd$ as true parameter.
The resistance vector $v_0$ of this variogram $\Gamma$ does not satisfy $v_0 \geq \zerovec$, and therefore we use the adapted vector $v_0^*$ in \cref{eq:v0*} instead.

In order to consider not just vectors from the interior of the simplex $\Delta_{d-1}$,
we enforce different sparsity levels for the vectors $v$,
by randomly setting $0, 1, \dots, d-2$ entries of $v$ to zero.
The remaining non-zero entries are then sampled
uniformly from the corresponding face of the probability simplex.
Furthermore, we consider the
vectors $v_0^*$, $\dinvvec$, and the unit vectors $e_m$ for $m=1, \dots, d$.
In total, we consider the same 500 distinct vectors for each setup.

To evaluate the performance of different estimators $\hat\Gamma^v$,
we compute their (average) squared bias and variance as
\begin{align*}
    \text{Bias}^2(\hat\Gamma^v)
    &=
    \frac{1}{d(d-1)}\sum_{i\neq j}
    \lr{\E{\hat\Gamma^v_{ij}} - \Gamma_{ij}}^2,
    \\
    \text{Variance}(\hat\Gamma^v)
    &=
    \frac{1}{d(d-1)}\sum_{i\neq j}
    \Var\lr{\hat\Gamma^v_{ij}},
\end{align*}
replacing expectations and variances by their sample versions based on the $M$ dataset realizations.
It turns out that the half-space measure $\Lambda(\Hcal^v)$
is strongly correlated with both the variance and bias
of the corresponding estimator $\hat\Gamma^v$, as shown in the left and center panel of \cref{fig:biasVarianceWeights}.
We also plot the estimators in terms of their bias and variance
in the right panel of that figure.
We observe a clear bias-variance tradeoff,
with the unit vectors $e_m$ yielding low variance but high bias,
and the vector $v_0^*$ yielding low bias but high variance.
Other vectors fall between these two extremes,
depending on their half-space measure $\Lambda(\Hcal^v)$.
This behavior can be explained by implicit censoring, similarly as in the bivariate case above.

\begin{figure}
    \centering
    \resizebox{1\textwidth}{!}{%
        \includegraphics{figures/simulation/HR/individual/biases2_vs_weights.pdf}
        \includegraphics{figures/simulation/HR/individual/variances_vs_weights.pdf}
        \includegraphics{figures/simulation/HR/individual/variances_vs_biases2_with_level_curve.pdf}
    }
    \caption{%
        Estimator performance for data in the domain of attraction,
        using $n=2000$ observations per dataset and a threshold of $p=0.98$.
        Each point corresponds to a choice of $v \in \Delta_{d-1}$.
        Shown are the squared bias (left) and variance (center)
        against the half-space measure $\Lambda(\Hcal^v)$,
        and variance against squared bias (right) for the different vectors $v$.
        The unit vectors $e_m$ are shown in cyan,
        $v_0^*$ in orange,
        and $\dinvvec$ in green.
    }
    \label{fig:biasVarianceWeights}
\end{figure}

The optimal choice of vector depends strongly on the setup.
\cref{fig:biasVarianceFixK} shows the bias-variance tradeoff for different threshold
quantiles $p$.
For each dataset, the sample size $n$ is chosen such that a fixed number of $k=100$
observations exceeds the threshold.
This is not a practical scenario,
as $n$ is usually fixed in real applications,
but it allows us to isolate the effect of ``more extreme'' data
from the effect of having less data for higher thresholds.
The level curve of the best MSE is indicated by the gray line,
showing that the quality of the estimation improves as the threshold increases.

\cref{fig:biasVarianceFixN} shows the more realistic scenario of having
a fixed sample size $n=500$ and having fewer threshold exceedances for higher thresholds $p$.
Here, we observe the same general trend,
with respect to the bias-variance tradeoff for different vectors.
However, due to the decreasing number of threshold exceedances for higher thresholds,
the overall performance first improves and then degrades as seen by the increasing level curve of the best MSE.

\begin{figure}
    \centering
    \resizebox{\textwidth}{!}{%
        \includegraphics{figures/simulation/HR/combined/fix_k100_no_bags.pdf}
    }
    \caption{%
        Bias-variance tradeoff for different threshold quantiles $p$,
        with a fixed number of $k=100$ threshold exceedances for each dataset.
        The limiting \multGP{} distribution is included as the $p=1.0$ quantile.
        Highlighted vectors are colored as in \cref{fig:biasVarianceWeights}.
    }
    \label{fig:biasVarianceFixK}
\end{figure}

\begin{figure}
    \centering
    \resizebox{\textwidth}{!}{%
        \includegraphics{figures/simulation/HR/combined/fix_n500_no_bags.pdf}
    }
    \caption{%
        Bias-variance tradeoff for different threshold quantiles $p$,
        with a fixed sample size of $n=500$ for each dataset.
        Highlighted vectors are colored as in \cref{fig:biasVarianceWeights}.
    }
    \label{fig:biasVarianceFixN}
\end{figure}

In both \cref{fig:biasVarianceFixK,fig:biasVarianceFixN},
we observe that for low thresholds $p$,
the bias dominates and vectors with smaller half-space measures perform better.
As the threshold increases,
the variance starts to dominate and for large thresholds,
the unit vectors $e_m$ perform best as they use the most data.
In the limiting model,
included as the $p=1.0$ quantile in \cref{fig:biasVarianceFixK},
the bias is zero and the performance of the estimators only depends on the variance.

\subsection{Ensemble estimators}

Similar to the empirical variogram from \cite{engelkeVolgushev2022},
we define ensemble estimators by averaging over multiple $v$-variograms $\hat\Gamma^v$.
Based on the strong correlation between the half-space measure $\Lambda(\Hcal^v)$
and the bias and variance of the corresponding estimator $\hat\Gamma^v$,
we consider a grouping strategy based on the half-space measures $\Lambda(\Hcal^v)$,
dividing the vectors into ten groups of equal size based on their half-space measure.
As this quantity is not known in practice,
we also include a grouping strategy based on the sparsity of the vectors $v$,
which happens to be a reasonable proxy for $\Lambda(\Hcal^v)$.
The empirical variogram from \cite{engelkeVolgushev2022} is included here
as the ensemble of vectors with sparsity $d-1$,
each containing only a single non-zero entry.
Furthermore, we consider an ensemble of all vectors from above.
In total, we consider 21 different ensemble estimators:
ten based on half-space measure, ten based on sparsity,
and the ensemble of all vectors.

\begin{figure}
    \centering
    \resizebox{\textwidth}{!}{%
        \includegraphics{figures/simulation/HR/ensembles/variances_vs_biases2_with_level_curve.pdf}
        \includegraphics{figures/simulation/HR/ensembles/mses_vs_weights_pareto.pdf}
    }
    \caption{%
        Performance of ensemble estimators,
        compared to the performance of single vector estimators.
        Individual vectors are highlighted as in \cref{fig:biasVarianceWeights}.
        Bags are shown as squares,
        the ensemble of all 500 vectors $v$
        is highlighted in purple,
        and the empirical variogram from \cite{engelkeVolgushev2022} in blue.
        Left: data in the domain of attraction with $n=2000$ and $p=0.98$.
        Right: data from the limiting \multGP{} distribution with $n=2000$.
        Note that we show MSE vs.~half-space measure for the limiting distribution,
        as in this scenario all estimators are unbiased.
    }
    \label{fig:ensembleBiasVariance}
\end{figure}

\cref{fig:ensembleBiasVariance} shows that the ensemble estimators
perform significantly better than single vector estimators,
reducing their variance and averaging the biases.
This is in line with expectations,
since different $\hat\Gamma^v$ are based on different but overlapping sets of observations,
leading to positively correlated estimators.
Since we grouped the vectors based on their half-space measures,
the same bias-variance tradeoff from before is preserved for the ensemble estimators,
with the optimal choice of vectors depending on the setup,
see also \cref{fig:biasVarianceFixKWithBags}.

\begin{figure}
    \centering
    \resizebox{\textwidth}{!}{%
        \includegraphics{figures/simulation/HR/combined/fix_k100.pdf}
    }
    \caption{%
        Bias-variance tradeoff for different threshold quantiles $p$
        and fixed number of $k=100$ threshold exceedances for each dataset,
        including ensemble estimators.
        Color highlights and shapes are as in \cref{fig:ensembleBiasVariance}.
    }
    \label{fig:biasVarianceFixKWithBags}
\end{figure}

\section*{Acknowledgements}

The authors thank Ignacio Echave-Sustaeta Rodríguez for helpful discussions and comments.

\clearpage
\appendix

\section{Details on the Poisson point process construction}
\label{sec:ppp_construction}

Consider the Poisson point processes $\Pi^0$ on $\R \times \Rd$ given by
\begin{equation*}
    \Pi^0
    =
    \sum_{i\in\N} \delta_{(\xi_i, U_{(i)})}
\end{equation*}
with intensity $\exp(-x) dx \P[U]$,
where $U$ is a (possibly degenerate)
$d$-dimensional random vector $U$ with $\E(e^{U_i}) = 1$, $i=1,\dots, d$ and $U_{(i)}$ are independent copies of $U$.
From this, we construct the point process $\Pi$ on $\Rd$ as
\begin{equation*}
    \Pi
    =
    \sum_{i\in\N} \delta_{(\xi_i + U_{(i)})}
    ,
\end{equation*}
which has intensity $\Lambda(A) = \int_{\R} e^{-x} \P(U + \onevec x \in A)dx$.
The corresponding max-stable process $Z$ on exponential scale is given by
\begin{equation*}
    Z
    =
    \max_{i\in\N} \xi_i + U_{(i)}
    .
\end{equation*}
The corresponding multivariate Pareto distribution $Y$
is defined by the restriction of $\Pi$ to the set $\Lcal = \setm{y \in \Rd}{y \nleq \zerovec}$,
that is
\begin{equation*}
    \P\lr{Y \in A}
    =
    \Lambda(A \cap \Lcal) / \Lambda(\Lcal)
    .
\end{equation*}
For details on this construction, see for example \citet{resnick2008extreme}.

\section{Details on densities on hyperplanes}
\label{sec:density_details}

When defining densities on hyperplanes,
we consider them with respect to the following dominating measure.
\begin{definition}
    \label{def:hyperplane_measure}
    Let $\lambda_{d-1}$ be the $(d-1)$-dimensional Lebesgue measure on $\Rd[d-1]$.
    For $v\in\Rd$ define $\mu_v$ as
    \begin{align*}
        \mu_v(A) = \lambda_{d-1}(Q_v\T A)
        ,
    \end{align*}
    for measurable $A \subseteq v\orth$,
    and $Q_v \in \Rd[d\times(d-1)]$
    with columns forming an orthonormal basis of $v\orth$.
\end{definition}
Note that the measure is independent of the choice of the orthonormal basis $Q_v$.
This measure is the Hausdorff measure on the hyperplane $v\orth$,
using scaling such that it coincides with
the $(d-1)$-dimensional Lebesgue measure on $\Rd[d-1]$ when $v = \kev{m}$.

\begin{lemma}
    \label{lemma:hyperplane_normal_density}
    Let $v \in \Rd$ and
    $X$ be a multivariate normal random vector
    supported on $v\orth$,
    with mean $\mu \perp v$ and covariance $\Sigma$,
    satisfying $\kernel(\Sigma) = \laspan(v)$.
    Then its density with respect to $\mu_v$ is
    \begin{align*}
        f(x)
        =
        (2\pi)^{-(d-1)/2}
        \pdet{\Sigma}^{-1/2}
        \exp\lr{
            -\half
            (x-\mu)\T
            \Sigma\pinv
            (x-\mu)
        }
        ,
    \end{align*}
    for $x \in v\orth$.
\end{lemma}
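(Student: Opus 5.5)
The natural approach is to reduce to an explicit parametrization of the hyperplane and then apply the standard nondegenerate Gaussian density formula.

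\emph{Step 1: reduce to $\Rd[d-1]$.} Fix an orthonormal basis $Q_v \in \Rd[d\times(d-1)]$ of $v\orth$ as in \cref{def:hyperplane_measure}. Extending it by $v/\norm{v}$ gives an orthogonal matrix $[Q_v, v/\norm{v}]$. Since $X$ is supported on $v\orth$ and $\mu \perp v$, we have $X = Q_v Z$ almost surely, where $Z := Q_v\T X \in \Rd[d-1]$. The vector $Z$ is Gaussian with mean $Q_v\T\mu$ and covariance $\tilde\Sigma := Q_v\T \Sigma Q_v$. Moreover $\Sigma = Q_v \tilde\Sigma Q_v\T$, because $\Sigma v = \zerovec$ and $\Sigma$ is symmetric, so $\Sigma$ maps $v\orth$ into itself and annihilates $\laspan(v)$.

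\emph{Step 2: nondegeneracy of $Z$.} The condition $\kernel(\Sigma) = \laspan(v)$ gives $\kernel(\tilde\Sigma) = \set{\zerovec}$. Indeed, if $\tilde\Sigma z = \zerovec$, then $\Sigma Q_v z = Q_v\tilde\Sigma z = \zerovec$, so $Q_v z \in \laspan(v) \cap v\orth = \set{\zerovec}$, and hence $z = \zerovec$. Therefore $Z$ has the usual Lebesgue density
\[
(2\pi)^{-(d-1)/2}\det{\tilde\Sigma}^{-1/2}\exp\lr{-\half (z-Q_v\T\mu)\T\tilde\Sigma\inv(z-Q_v\T\mu)}.
\]
By definition of $\mu_v$, we have $\P(X\in A) = \P(Z \in Q_v\T A)$ for measurable $A \subseteq v\orth$. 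Since $Q_v\T$ restricted to $v\orth$ is a bijection onto $\Rd[d-1]$, the density of $X$ with respect to $\mu_v$ is $f(x) = f_Z(Q_v\T x)$ for $x \in v\orth$.

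\emph{Step 3: translate back.} Two identities remain, and verifying them is the only real (mild) obstacle.
\begin{enumerate}
\item $\pdet{\Sigma} = \det{\tilde\Sigma}$. This holds because, in the orthonormal basis $[Q_v, v/\norm{v}]$, the matrix $\Sigma$ is block-diagonal of the form $\mathrm{diag}(\tilde\Sigma, 0)$, so its nonzero eigenvalues are exactly the eigenvalues of $\tilde\Sigma$.
\item $\Sigma\pinv = Q_v\tilde\Sigma\inv Q_v\T$. We check the Moore--Penrose conditions directly, using $Q_v\T Q_v = \Id_{d-1}$. The product $\Sigma\,Q_v\tilde\Sigma\inv Q_v\T = Q_vQ_v\T$ is the orthogonal projector onto $v\orth$, which is symmetric; the other three conditions follow similarly.
\end{enumerate}
Then for $x\in v\orth$ we write $x-\mu = Q_vQ_v\T(x-\mu)$, which is valid because $x - \mu \in v\orth$. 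This gives
\[
(x-\mu)\T\Sigma\pinv(x-\mu) = (Q_v\T x - Q_v\T\mu)\T\tilde\Sigma\inv(Q_v\T x-Q_v\T\mu),
\]
and substituting into Step 2 yields the claimed formula.

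Finally, the resulting expression involves only $\Sigma$, $\mu$ and $x$, not $Q_v$. This is consistent with the earlier remark that $\mu_v$ is independent of the choice of basis.
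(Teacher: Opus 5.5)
Your proof is correct and follows essentially the same route as the paper: pass to coordinates $Q_v\T X$ with respect to an orthonormal basis of $v\orth$, use the nondegenerate $(d-1)$-dimensional Gaussian density, and translate back using $\det{Q_v\T\Sigma Q_v} = \pdet{\Sigma}$ and $\Sigma\pinv = Q_v (Q_v\T\Sigma Q_v)\inv Q_v\T$. Compared with the paper, you additionally derive the invertibility of $Q_v\T\Sigma Q_v$ from $\kernel(\Sigma)=\laspan(v)$ and verify both matrix identities, which the paper only asserts as basic properties.
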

\begin{proof}[\proofref{lemma:hyperplane_normal_density}]
    Let the columns of $Q_v \in \Rd[d\times(d-1)]$
    form an orthonormal basis of $v\orth$
    and write $C = Q_v\T \Sigma Q_v \in \Rdd[(d-1)]$.
    Let $Z \sim \normal\lr{\zerovec, C}$
    and observe that
    $X \deq \mu + Q_v Z$, since $Q_v Q_v\T$ is the orthogonal projection matrix onto $v\orth$.
    Since $C$ is invertible,
    $Z$ has density
    \begin{align*}
        f_Z(z)
        =
        (2\pi)^{-(d-1)/2}
        \det{C}^{-1/2}
        \exp\lr{
            -\half
            z\T
            C\inv
            z
        }
        ,
    \end{align*}
    with respect to $\lambda_{d-1}$.
    Hence, the density of $X$ with respect to $\mu_v$ is given by
    \begin{align*}
        f(x)
        =
        f_Z(Q_v\T(x-\mu))
        &=
        (2\pi)^{-(d-1)/2}
        \det{C}^{-1/2}
        \exp\lr{
            -\half
            (x-\mu)\T
            Q_v
            C\inv
            Q_v\T
            (x-\mu)
        }
        \\
        &=
        (2\pi)^{-(d-1)/2}
        \pdet{\Sigma}^{-1/2}
        \exp\lr{
            -\half
            (x-\mu)\T
            \Sigma\pinv
            (x-\mu)
        }
        .
    \end{align*}
    In the last equality we use that $\det{C} = \pdet{\Sigma}$ and
    $\Sigma\pinv = Q_v C\inv Q_v\T$,
    which follows from the fact that $Q_v$ is orthonormal
    and basic properties of the pseudodeterminant and pseudoinverse.
\end{proof}

\begin{lemma}
    \label{lemma:hyperplanes_jacobian}
    Let $u, v \in \Rd$,
    satisfying $\onevec\T u = \onevec\T v = 1$,
    and $P_v = \Id - \onevec v\T$
    be the oblique projection along $\onevec$ onto $v\orth$.
    Let $J_{uv}$ denote the Jacobian determinant of the
    mapping $P_v: u\orth \to v\orth$.
    Then
    \begin{equation*}
        J_{uv} = \frac{\norm{v}}{\norm{u}}
        .
    \end{equation*}
\end{lemma}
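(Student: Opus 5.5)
The plan is to reduce the statement to a determinant computation in $\Rd$, using a linear map of $\Rd$ that acts as $P_v$ on $u\orth$ and fixes the $\onevec$-direction.

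\textbf{Step 1: the restricted map is a bijection.} The restriction $P_v|_{u\orth}\colon u\orth \to v\orth$ is a linear bijection. Its inverse is $P_u|_{v\orth}$, because $P_u P_v = P_u$ and $P_u$ acts as the identity on $u\orth$. Hence $J_{uv}$ is the constant factor satisfying $\mu_v(P_v A) = J_{uv}\,\mu_u(A)$ for measurable $A \subseteq u\orth$, with $\mu_u,\mu_v$ as in \cref{def:hyperplane_measure}.

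\textbf{Step 2: extend to all of $\Rd$.} Since $\onevec\T u = 1 \neq 0$, we have the decomposition $\Rd = u\orth \oplus \laspan(\onevec)$. Define $T\colon \Rd\to\Rd$ by $T(x + s\onevec) = P_v x + s\onevec$ for $x \in u\orth$ and $s \in \R$. Writing $y = x + s\onevec$ gives $s = u\T y$ and $x = P_u y$, so
\begin{equation*}
    T = P_v P_u + \onevec u\T = P_v + \onevec u\T = \Id + \onevec (u - v)\T .
\end{equation*}
By the matrix determinant lemma, $\det T = 1 + (u-v)\T\onevec = 1$. So $T$ preserves $d$-dimensional Lebesgue measure.

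\textbf{Step 3: compare prism volumes.} For $A \subseteq u\orth$, consider the prism $A + \halfopen{0,1}\onevec$. Its image under $T$ is $P_v A + \halfopen{0,1}\onevec$. Choose orthonormal coordinates consisting of $Q_u$ and the unit normal $u/\norm{u}$. By Fubini, the Lebesgue measure of the prism is $\mu_u(A)$ times the height of the unit segment along $\onevec$ in the normal direction, namely $\abs{u\T\onevec}/\norm{u} = 1/\norm{u}$. This holds because a shear along the hyperplane does not change volume: each slice at normal height $h$ is a translate of $A$. Likewise the image has volume $\mu_v(P_v A)/\norm{v}$. Equating the two volumes via Step 2 gives
\begin{equation*}
    \mu_v(P_v A) = \frac{\norm{v}}{\norm{u}}\,\mu_u(A),
\end{equation*}
that is, $J_{uv} = \norm{v}/\norm{u}$.

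The only step needing care is the prism-volume identity in Step 3. I would justify it by writing points as $Q_u z + h\,u/\norm{u}$ and noting that the map $(a,t)\mapsto a + t\onevec$ from $A\times\halfopen{0,1}$ has fibres at fixed normal height $h = t/\norm{u}$ equal to translates of $A$ within $u\orth$. A change of variables $h = t/\norm{u}$ then produces the factor $1/\norm{u}$.
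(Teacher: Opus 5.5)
Your proof is correct and follows essentially the same route as the paper. Both arguments compute the $d$-dimensional volume of a prism over the hyperplane in the $\onevec$-direction in two ways: as base measure times the height $1/\norm{u}$, and via a determinant. Both then use that $P_v$ only shifts points along $\onevec$, which preserves $d$-volume; your $\det T = 1$ for $T = \Id + \onevec(u-v)\T$ is exactly the paper's observation that $\det{\brackets{P_v A_u, \onevec}} = \det{\brackets{A_u, \onevec}}$. The only difference is cosmetic: you work with an arbitrary measurable $A \subseteq u\orth$ and an explicit shear, while the paper uses the parallelepiped spanned by a basis $A_u$ of $u\orth$.
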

\begin{proof}[\proofref{lemma:hyperplanes_jacobian}]
    We give a geometric proof,
    using the fact that the Jacobian determinant of a map
    is the factor by which it scales volumes.
    For a full-rank matrix $A \in \Rd[d\times k]$,
    let $V(A)$ denote the $k$-dimensional volume
    (Hausdorff measure)
    of the parallelepiped
    spanned by its columns.

    Next, consider
    $A_u \in \Rd[d\times(d-1)]$ with columns forming a basis of $u\orth$,
    and the \pppd{} spanned by
    the columns of $A_u$ and $\onevec$.
    Its volume $V(\brackets{A_u, \onevec})$
    can be computed in two ways:
    by multiplying the volume of its base $V(A_u)$
    with its height $\norm{\onevec - (\Id - uu\T/(uu\T)) \onevec}$,
    and
    as the determinant of the matrix $\brackets{A_u, \onevec}$.
    Hence, we have
    \begin{equation*}
        \det{\brackets{A_u, \onevec}}
        =
        V(\brackets{A_u, \onevec})
        =
        V(A_u) \cdot \norm{\onevec - (\Id - uu\T/(uu\T)) \onevec}
        =
        V(A_u) \cdot \norm{u u\T\onevec / (u\T u)}
        =
        V(A_u) / \norm{u}
        ,
    \end{equation*}
    implying $V(A_u) = \norm{u} \cdot \det{\brackets{A_u, \onevec}}$.
    Similarly,
    for $A_v = P_v A_u$ we have
    $V(A_v) = \norm{v} \cdot \det{\brackets{P_v A_u, \onevec}}$.
    Note that the projection $A_u \mapsto P_v A_u$
    adds a multiple of $\onevec$ to each column of $A_u$,
    and hence does not change the determinant of the matrix $\brackets{A_u, \onevec}$,
    i.e.,
    $\det{\brackets{P_v A_u, \onevec}} = \det{\brackets{A_u, \onevec}}$.

    Putting everything together,
    we can compute the Jacobian determinant
    of the mapping $P_v: u\orth \to v\orth$ as
    \begin{equation*}
        J_{uv}
        =
        \frac{V(A_v)}{V(A_u)}
        =
        \frac{
            \norm{v} \cdot \det{\brackets{P_v A_u, \onevec}}
        }{
            \norm{u} \cdot \det{\brackets{A_u, \onevec}}
        }
        =
        \frac{\norm{v}}{\norm{u}}
        .
        \qedhere
    \end{equation*}
\end{proof}

\section{Proofs}
\subsection{Proofs from \cref{sec:stoch_rep}}

\begin{proof}[\proofref{prop:stochrep_halfspace}]
    The proof follows along the lines of the proof of Proposition~3.3 in \cite{wan2025},
    generalizing the vector~$\onevec$ to any $v \in \Delta_{d-1}$,
    and adjusting projections and half-spaces accordingly.
    Let $Q(x) = x - \onevec \max(x)$ denote the (non-linear) projection onto the set
    $\setm{x \in \Rd}{\max(x) = 0}$.
    Let $\tilde W^v \deq P_v Y$,
    supported on~$v\orth$,
    and $S \deq Q(Y) \deq Q(\tilde W^v)$.
    From Theorem~7 in \cite{rootzen2018} we have the stochastic representation
    \begin{align*}
        Y
        &\deq
        S + \onevec E
        \\ &=
        \tilde W^v - \onevec \max(\tilde W^v) + \onevec E
        ,
    \end{align*}
    where $E$ follows a standard exponential distribution,
    and $S$ and $E$ are independent.
    Using $\onevec\T v = 1$, we have
    \begin{equation*}
        v\T Y \geq 0
        \iff
        E - \max(\tilde W^v) \geq 0
    \end{equation*}
    and
    \begin{equation*}
        Y^v
        \deq
        \onevec (E - \max(\tilde W^v)) + \tilde W^v
        \mid
        \set{E \geq \max(\tilde W^v)}
        .
    \end{equation*}
    Next,
    consider sets of the shape $A = B + \onevec \cdot \halfopen{s, \infty}$,
    for $B \subset v\orth$ and $s \in \R$.
    We have
    \begin{align*}
        \P(Y^v \in A)
        &=
        \P(Y \in A \mid v\T Y \geq 0)
        \\ &=
        \P(
            \tilde W^v \in B , v\T Y \geq s
            \mid
            v\T Y \geq 0
        )
        \\ &=
        \P(
            \tilde W^v \in B , E - \max(\tilde W^v) \geq s
            \mid
            v\T Y \geq 0
        )
        \\ &=
        \frac{
            \int_s^\infty
            e^{-t}
            \P(\tilde W^v \in B , \max(\tilde W^v) \leq t - s)
            dt
        }{
            \int_0^\infty
            e^{-t}
            \P(\max(\tilde W^v) \leq t)
            dt
        }
        \\ &=
        \frac{
            \int_0^\infty
            e^{-(u+s)}
            \P(\tilde W^v \in B , \max(\tilde W^v) \leq u)
            du
        }{
            \int_0^\infty
            e^{-t}
            \P(\max(\tilde W^v) \leq t)
            dt
        }
        \\ &=
        e^{-s}
        \cdot
        \frac{
            \int_0^\infty
            e^{-u}
            \P(\tilde W^v \in B , \max(\tilde W^v) \leq u)
            du
        }{
            \int_0^\infty
            e^{-t}
            \P(\max(\tilde W^v) \leq t)
            dt
        }
        .
    \end{align*}
    For $B = v\orth$ we get
    \begin{equation*}
        \P(E - \max(\tilde W^v) \geq s \mid v\T Y \geq 0)
        =
        e^{-s}
        ,
    \end{equation*}
    and for $s=0$ we obtain
    \begin{equation*}
        \P(\tilde W^v \in B \mid v\T Y \geq 0)
        =
        \frac{
            \int_0^\infty
            e^{-u}
            \P(\tilde W^v \in B , \max(\tilde W^v) \leq u)
            du
        }{
            \int_0^\infty
            e^{-t}
            \P(\max(\tilde W^v) \leq t)
            dt
        }
        .
    \end{equation*}
    Hence, the distribution of $Y^v$ decouples into the distribution of
    $W^v = \tilde W^v \mid \set{v\T Y \geq 0}$
    and an independent standard exponential random variable.
\end{proof}

\begin{proof}[\proofref{prop:generator_tilting}]
    The distribution of $Y^v$ can be expressed
    by restricting the exponent measure $\Lambda$ to the half-space $\Hcal^v$.
    By plugging in the expression for $\Lambda$ from \cref{eq:exponent_measure},
    we then get, for any Borel set $A \subset \Rd$,
    \begin{align}
        \nonumber
        \P\lr{Y^v \in A}
        =
        \frac{\Lambda(A \cap \Hcal^v)}{\Lambda(\Hcal^v)}
        &=
        \frac{1}{\Lambda(\Hcal^v)}
        \int_{\R}
        e^{-x}
        \P\lr{U+\onevec x \in A, v\T U + x > 0}
        dx
        \\ &=
        \nonumber
        \frac{1}{\Lambda(\Hcal^v)}
        \int_{\R}
        e^{-x}
        \E\lr{\indicator{U + \onevec x \in A, x > -v\T U}}
        dx
        \\ &=
        \label{eq:Yv_distribution}
        \frac{1}{\Lambda(\Hcal^v)}
        \E\lr{
        \int_{-v\T U}^\infty
        e^{-x}
        \indicator{U + \onevec x \in A}
        dx}.
    \end{align}
    Next,
    we use the exponentially tilted random vector $U^v$
    to construct a random vector $\tilde Y^v$ as $\tilde Y^v := \onevec E + U^v - \onevec v\T U^v$,
    with an independent standard exponential random variable $E$,
    which has distribution
    \begin{align*}
        \P\lr{\tilde Y^v \in A}
        &=
        \int_0^\infty
        e^{-x}
        \P\lr{U^v + \onevec x - \onevec v\T U^v \in A}
        dx
        \\ &=
        \frac{1}{\E\lr{\exp\lr{v\T U}}}
        \int_0^\infty
        e^{-x}
        \E\lr{
            \exp\lr{v\T U}
            \indicator{U + \onevec x - \onevec v\T U \in A}
        }
        dx
        \\ &=
        \frac{1}{\E\lr{\exp\lr{v\T U}}}
        \E\lr{
        \int_0^\infty
        e^{-(x-v\T U)}
        \indicator{U + \onevec (x - v\T U) \in A}
        dx}
        \\ &=
        \frac{1}{\E\lr{\exp\lr{v\T U}}}
        \E\lr{
        \int_{-v\T U}^\infty
        e^{-x}
        \indicator{U + \onevec x \in A}
        dx}
        .
    \end{align*}
    Up to the normalizing constant,
    this is the same as \cref{eq:Yv_distribution}.
    Since both are probability distributions with the same support,
    their normalizing constants must be equal,
    i.e.,
    \begin{equation}
        \label{eq:proportionality_constant}
        \Lambda(\Hcal^v) = \E\lr{\exp\lr{v\T U}},
    \end{equation}
    and we have the stochastic representation
    \begin{equation*}
        Y^v
        \deq
        \onevec E + U^v - \onevec v\T U^v
        ,
    \end{equation*}
    which yields the result after applying the projection $P_v$ to both sides.
    The density expression follows directly from the definition of exponential tilting.
\end{proof}

\begin{proof}[\proofref{cor:extremal_functions_are_generators}]
    We check the exponent measure $\Lambda^{W^v}$ obtained from \cref{eq:exponent_measure}
    when plugging in $W^v + \onevec c$ for some $c \in \R$ as the generator $U$.
    Using the representation $W^v \deq P_v U^v$ from \cref{prop:generator_tilting},
    we obtain, for any Borel set $A \subset \Rd$,
    \begin{align*}
        \Lambda^{W^v}(A)
        &=
        \int_{\R}
        e^{-x}
        \P(W^v + \onevec c + \onevec x \in A)
        dx
        \\ &=
        \int_{\R}
        e^{-x}
        \P(U^v + \onevec (c + x - v\T U^v) \in A)
        dx
        \\ &=
        \int_{\R}
        e^{-x}
        \frac{1}{\E\lr{\exp\lr{v\T U}}}
        \E\lr{
            \exp\lr{v\T U}
            \indicator{U + \onevec (c + x - v\T U) \in A}
        }
        dx
        \\ &=
        \frac{1}{\E\lr{\exp\lr{v\T U}}}
        \E\lr{
            \int_{\R}
            e^{-(x-v\T U)}
            \indicator{U + \onevec (c + x - v\T U) \in A}
            dx
        }
        \\ &=
        \frac{1}{\E\lr{\exp\lr{v\T U}}}
        \E\lr{
            \int_{\R}
            e^{-x + c}
            \indicator{U + \onevec x \in A}
            dx
        }
        \\ &=
        \frac{
            \exp\lr{c}
        }{
            \E\lr{\exp\lr{v\T U}}
        }
        \Lambda(A)
        .
    \end{align*}
    In the third equality we use the definition of $U^v$
    and in the fifth equality we use the change of variable $x \mapsto x + v\T U - c$.
    For $c = \log\lr{\E\lr{\exp\lr{v\T U}}}$ we obtain $\Lambda^{W^v} \equiv \Lambda$.
\end{proof}

\begin{proof}[\proofref{cor:density_Yv_from_lambda}]
    The proportionality of $f_{Y^v}$ and $\lambda$ follows from the fact
    that $Y^v$ is defined as the restriction of $Y$ to the half-space $\Hcal^v$, while the distribution of $Y$ is equal to $\Lambda$ restricted to $\Lcal$ and normalized to a probability measure, where $\Hcal^v \subset \Lcal$.
    The value of the proportionality constant $\Lambda(\Hcal^v) = \E\lr{\exp\lr{v\T U}}$ is given in \cref{eq:proportionality_constant}.
    The second expression follows by rearranging the first one for $\lambda$,
    using different vectors $u$ and $v$.
\end{proof}

\begin{proof}[\proofref{corr:extremal_function_tilting}]
    Consider $U^u$ and $U^v$ as in \cref{prop:generator_tilting}.
    Since consecutive exponential tilting is equivalent to summing the tilting vectors,
    we have that
    $U^v$ is the exponentially tilted version of $U^u$ with tilting vector $v-u$.
    Let $A^0 = A^v + \onevec \R$ and note that $P_u\inv(A^u) = P_v\inv(A^v) = A^0$.
    Then, for the extremal functions $W^u$ and $W^v$,
    we have
    \begin{align*}
        \P(W^v \in A^v)
        &=
        \P(U^v \in A^0)
        \\ &=
        \frac{1}{
            \E\lr{\exp\lr{(v-u)\T U^u}}
        }
        \E\lr{
            \exp\lr{(v-u)\T U^u}
            \indicator{U^u \in A^0}
        }
        \\ &=
        \frac{1}{
            \E\lr{\exp\lr{v\T W^u}}
        }
        \E\lr{
            \exp\lr{v\T W^u}
            \indicator{W^u \in A^u}
        }
        ,
    \end{align*}
    using in the last equality that
    \begin{equation}
        \label{eq:extremal_function_tilting_identity}
        v\T W^u
        =
        v\T P_u U^u
        =
        v\T(U^u - \onevec u\T U^u)
        =
        (v-u)\T U^u
        .
    \end{equation}
    To compute the density of $W^v$,
    consider $\tilde W^u = P_u W^v$,
    which satisfies
    \begin{align*}
        \P(\tilde W^u \in A^u)
        &=
        \P(W^v \in A^v)
        \\ &=
        \frac{1}{
            \E\lr{\exp\lr{v\T W^u}}
        }
        \E\lr{
            \exp\lr{v\T W^u}
            \indicator{W^u \in A^u}
        }
        .
    \end{align*}
    This is an exponentially tilted version of $W^u$ with tilting vector $v$
    and density
    \begin{equation*}
        f_{\tilde W^u}(x)
        =
        \frac{1}{
            \E\lr{\exp\lr{v\T W^u}}
        }
        \exp\lr{v\T x}
        f_{W^u}(x)
        , \qquad
        x \in u\orth
        .
    \end{equation*}
    The distribution of $W^v$ is then obtained by projecting $\tilde W^u$ back to $v\orth$,
    which is a linear transformation with
    inverse $P_u$ and Jacobian $\norm{v}/\norm{u}$
    (see \cref{lemma:hyperplanes_jacobian}).
    Its density on $v^\perp$ is therefore given by
    \begin{equation*}
        f_{W^v}(y)
        =
        \frac{\norm{u}}{\norm{v}} f_{\tilde W^u}(P_u y)
        =
        \frac{\norm{u}}{\norm{v}}
        \frac{1}{
            \E\lr{\exp\lr{v\T W^u}}
        }
        \exp\lr{v\T P_u y}
        f_{W^u}(P_u y)
        ,
    \end{equation*}
    which is the stated density after simplifying $v\T P_u y = v\T y - v\T \onevec u\T y = -u\T y$.
\end{proof}

\begin{proof}[\proofref{prop:density_Yv}]
    The proof goes by showing that the statement is true for all $u \in \Delta_{d-1}$ if it is true for one $v \in \Delta_{d-1}$,
    and then verifying the statement for the specific choice $v = \dinvvec$.
    To apply \cref{cor:density_Yv_from_lambda},
    we consider $y \in \Rd$ satisfying $v\T y > 0$ and $u\T y > 0$.
    Since $f_{Y^u}$ must satisfy homogeneity along the $\onevec$-direction (cf. \cref{sec:MPD})
    and any $y \in \Hcal^u$ can be shifted by a multiple of $\onevec$ to satisfy $v\T y > 1$,
    the density expression also holds for $y \in \Hcal^u$ with $v\T y \leq 0$.

    Assume that the density expression from the result holds for some $v \in \Delta_{d-1}$,
    and apply \cref{cor:density_Yv_from_lambda,corr:extremal_function_tilting}
    to obtain
    \begin{align}
        f_{Y^u}(y)
        &=
        \frac{
            \E\lr{\exp\lr{v\T U}}
        }{
            \E\lr{\exp\lr{u\T U}}
        }
        f_{Y^v}(y)
        \nonumber
        \\ &=
        \frac{
            \E\lr{\exp\lr{v\T U}}
        }{
            \E\lr{\exp\lr{u\T U}}
        }
        f_{W^v}(P_v y)
        \norm{v}
        \exp\lr{-v\T y}
        \nonumber
        \\ &=
        \frac{
            \E\lr{\exp\lr{v\T U}}
        }{
            \E\lr{\exp\lr{u\T U}}
        }
        \frac{
            \norm{u}
        }{
            \norm{v}
        }
        \frac{1}{
            \E\lr{\exp\lr{v\T W^u}}
        }
        \exp\lr{-u\T P_v y}
        f_{W^u}(P_u y)
        \norm{v}
        \exp\lr{-v\T y}
        \nonumber
        \\ &=
        \frac{
            \E\lr{\exp\lr{v\T U}}
        }{
            \E\lr{\exp\lr{u\T U}}
            \E\lr{\exp\lr{v\T W^u}}
        }
        \norm{u}
        \exp\lr{-u\T y + u\T\onevec v\T y}
        \exp\lr{-v\T y}
        f_{W^u}(P_u y)
        \nonumber
        \\ &=
        \frac{
            \E\lr{\exp\lr{v\T U}}
        }{
            \E\lr{\exp\lr{u\T U}}
            \E\lr{\exp\lr{v\T W^u}}
        }
        \norm{u}
        \exp\lr{-u\T y}
        f_{W^u}(P_u y)
        .
        \label{eq:density_Yu_intermediate}
    \end{align}
    To simplify the initial fraction in this expression,
    we use \cref{eq:extremal_function_tilting_identity}
    and the definition of $U^u$ as the exponentially tilted version of $U$ with tilting vector $u$,
    yielding
    \begin{align*}
        \E\lr{\exp\lr{v\T W^u}}
        &=
        \E\lr{\exp\lr{(v-u)\T U^u}}
        \\ &=
        \frac{1}{
            \E\lr{\exp\lr{u\T U}}
        }
        \E\lr{
            \exp\lr{u\T U}
            \exp\lr{(v-u)\T U}
        }
        \\ &=
        \frac{1}{
            \E\lr{\exp\lr{u\T U}}
        }
        \E\lr{
            \exp\lr{v\T U}
        }
        .
    \end{align*}
    The initial fraction in \cref{eq:density_Yu_intermediate}
    therefore simplifies to $1$,
    leaving
    \begin{equation*}
        f_{Y^u}(y)
        =
        \norm{u}
        \exp\lr{-u\T y}
        f_{W^u}(P_u y)
        ,
    \end{equation*}
    which is the stated density.

    Lastly, we verify the ``base case'' $v = \dinvvec$.
    The densities of $W^{\onevec}$ and $E \onevec$ can be multiplied directly,
    since the spaces they are supported on are orthogonal (see \cref{fig:densityProjection}).
    Note that the multiplication of $E$ with the vector $\onevec$
    has inverse $\dinvvec\T (\onevec E)=E$, with Jacobian $d\inv = \norm{\dinvvec}$,
    yielding
    \begin{equation*}
        f_{Y^{\onevec}}(y)
        =
        f_{W^{\onevec}}(P_{\onevec} y)
        \norm{\dinvvec}
        \exp\lr{-\dinvvec\T y}
        . \qedhere
    \end{equation*}
\end{proof}

\begin{proof}[\proofref{cor:v_variogram_expression}]
    We have
    \begin{equation*}
        Y^v
        \deq
        W^v + \onevec E
        \deq
        U^v + \onevec (E - v\T U^v)
    \end{equation*}
    and hence
    \begin{equation*}
        Y_i^v - Y_j^v
        \deq
        W_i^v - W_j^v
        \deq
        U_i^v - U_j^v
        ,
    \end{equation*}
    which implies that the variances of the three differences are equal.
\end{proof}

\begin{proof}[\proofref{ex:variogram_logistic}]
    From the proof of \citet[][Proposition 6]{dombry2016},
    we know that the logistic distribution with parameter $\theta \in (0, 1)$
    has generator $U = (U_1, \dots, U_d)$
    where $U_i$ are independent with Gumbel distribution with scale $\theta$
    and location $-\log(\Gamma(1-\theta))$.

    Following \cref{prop:generator_tilting}, the multivariate exponential tilting by $\exp(v\T U)$,
    applied to their product density,
    is then equivalent to a univariate exponential tilting of each $U_i$ by $\exp(v_i U_i)$,
    i.e.,
    \begin{equation}
        \label{eq:logistic_tilting}
        \exp(v\T U)
        \prod_{i=1}^d
        f_i(x_i)
        =
        \prod_{i=1}^d
        \exp(v_i x_i)
        f_i(x_i)
        .
    \end{equation}
    The variance of an
    exponentially tilted random variable can be computed as the
    second derivative of the cumulant-generating function of the original variable,
    evaluated at the tilting parameter $v_i$.
    For the considered Gumbel random variables with scale $\theta$ this computation yields
    \begin{equation*}
        \Var\lr{U^v_i}
        =
        \theta^2
        \psi^{(1)}\lr{1 - v_i \theta}
        .
    \end{equation*}
    By \cref{cor:v_variogram_expression} and the independence of the components $U_i^v$,
    the variogram of $Y^v$ is then given by
    \begin{align*}
        \Gamma_{ij}^v
        &=
        \Var\lr{U^v_i - U^v_j}
        \\ &=
        \Var\lr{U^v_i} + \Var\lr{U^v_j}
        \\ &=
        \theta^2
        \psi^{(1)}\lr{1 - v_i \theta}
        +
        \theta^2
        \psi^{(1)}\lr{1 - v_j \theta}
        . \qedhere
    \end{align*}
\end{proof}

\begin{proof}[\proofref{ex:variogram_dirichlet}]
    From \citet[][Theorem/Definition 2.3]{Corradini2024}
    we know the following generator $U$ for the Dirichlet model with parameters $\alpha_1, \dots, \alpha_d$:
    \begin{equation*}
        U = \log\lr{
            \tilde U_1, \dots, \tilde U_d
        }
        ,
    \end{equation*}
    where $\tilde U_i$ are independent $\text{Gamma}(\alpha_i, 1/\alpha_i)$ random variables.
    The $i$th entry of $U$ then has univariate (exp-Gamma) density
    \begin{equation*}
        f_i(x)
        =
        \frac{1}{\Gamma(\alpha_i) (1/\alpha_i)^{\alpha_i}}
        \exp\lr{
            \alpha_i x
        }
        \cdot
        \exp\lr{
            -\alpha_i \exp{x}
        }
        .
    \end{equation*}
    Applying the exponential tilting to their product density
    is equivalent to a univariate exponential tilting of each $U_i$ by $v_i$
    (cf. \cref{eq:logistic_tilting}).
    For the density of the tilted random vector $U^v$ we then have
    \begin{equation*}
        \P\lr{U^v \in dx}
        \propto
        \prod_{i=1}^d
        \exp\lr{v_i x_i}
        \cdot
        \exp\lr{
            \alpha_i x_i
        }
        \cdot
        \exp\lr{
            -\alpha_i \exp{x_i}
        }
        ,
    \end{equation*}
    which is again a product of exp-Gamma densities,
    now with parameters $\alpha_i + v_i$ and $1/\alpha_i$.

    Using the fact that the variance of an exponentially tilted random variable
    can be computed as the second derivative of the cumulant-generating function of the original variable,
    evaluated at the tilting parameter $v_i$,
    and an expression for the cumulant-generating function of an exp-Gamma random variable from \citet{Halliwell2021Log},
    we obtain
    \begin{equation*}
        \Var\lr{U^v_i}
        =
        \psi^{(1)}(\alpha_i + v_i)
        .
    \end{equation*}
    Computing the variogram of $Y^v$ using \cref{cor:v_variogram_expression}
    and the independence of the components $U_i^v$
    then yields
    \begin{align*}
        \Gamma_{ij}^v
        &=
        \Var\lr{U^v_i - U^v_j}
        \\ &=
        \Var\lr{U^v_i} + \Var\lr{U^v_j}
        \\ &=
        \psi^{(1)}(\alpha_i + v_i)
        +
        \psi^{(1)}(\alpha_j + v_j)
        . \qedhere
    \end{align*}
\end{proof}

\begin{proof}[\proofref{ex:HR}]
    The generator of the \HR{} distribution with parameter matrix $\Gamma$ can be chosen as any multivariate normal distribution with covariance matrix $\Sigma$ and mean vector $\mu = -\frac12 d_\Sigma$ whose variogram~\eqref{vario_def} is $\Gamma$ \citep[][Section 4.3]{engelke2020}.
    It is well-known that exponential tilting of a normal random vector
    corresponds to shifting the mean and leaving the covariance matrix unchanged.
    Therefore, we have
    \begin{equation*}
        U^v
        \sim
        \normal\lr{\mu + \Sigma v, \Sigma},
    \end{equation*}
    and subsequent multiplication with $P_v$ yields the stated distribution
    \begin{align*}
        W^v
        &\sim
        \normal\lr{P_v \mu + P_v \Sigma v, P_v \Sigma P_v\T}
        \\ &\sim
        \normal\lr{\mu_v, \Sigma^v}
        .
    \end{align*}
    In order to confirm the expression for $\mu_v$ and $\Sigma^v$ in terms of $\Gamma$,
    we use the identities $P_v \onevec = \zerovec$ and $\onevec\T v = 1$ to simplify
    \begin{align*}
        -\half P_v \Gamma P_v\T
        &=
        -\half P_v \lr{d_\Sigma \onevec\T + \onevec d_\Sigma\T - 2 \Sigma} P_v\T
        \\ &=
        -\half P_v (-2 \Sigma) P_v\T
        \\ &=
        P_v \Sigma P_v\T
        , \\
        -\half P_v \Gamma v
        &=
        -\half P_v \lr{d_\Sigma \onevec\T + \onevec d_\Sigma\T - 2 \Sigma} v
        \\ &=
        -\half P_v (d_\Sigma - 2 \Sigma v)
        \\ &=
        P_v \mu + P_v \Sigma v
        .
    \end{align*}
    To confirm the stated variogram expression, observe that
    \begin{equation*}
        \lr{\Sigma^v}_{ij}
        =
        e_i\T P_v \Sigma P_v\T e_j
        =
        \Sigma_{ij} - (\Sigma v)_i - (\Sigma v)_j + v\T \Sigma v,
    \end{equation*}
    and hence
    \begin{equation*}
        \Gamma^v_{ij}
        =
        \lr{\Sigma^v}_{ii} + \lr{\Sigma^v}_{jj} - 2 \lr{\Sigma^v}_{ij}
        =
        \Sigma_{ii} + \Sigma_{jj} - 2 \Sigma_{ij}
        =
        \Gamma_{ij}
        ,
    \end{equation*}
    since all expressions involving $v$ cancel out.
\end{proof}

\subsection{Proofs from \cref{sec:density}}

\begin{proof}[\proofref{lemma:integralHyper}]
    It is to be shown that $\E(\exp(v\T U)) = \exp\lr{-\frac{1}{4} v\T \Gamma v}$,
    for a valid generator~$U$ of the \HR{} distribution with parameter matrix $\Gamma$.
    From \cref{ex:HR}, we know that a normally distributed random vector $U$ with covariance matrix $\Sigma$ and mean vector $-\half d_\Sigma$
    is such a generator
    provided it has variogram matrix $\Gamma$, i.e., it satisfies
    \begin{equation*}
        \Gamma
        =
        d_\Sigma \onevec\T + \onevec d_\Sigma\T - 2\Sigma
        ,
    \end{equation*}
    where $d_\Sigma$ is the vector of diagonal entries of $\Sigma$.
    For such a random vector $U$, we have
    \begin{align*}
        \exp\lr{
            -\tfrac{1}{4} v\T \Gamma v
        }
        &=
        \exp\lr{
            -\tfrac{1}{4}
            v\T\lr{
                d_\Sigma \onevec\T + \onevec d_\Sigma\T - 2\Sigma
            }v
        }
        \\ &=
        \exp\lr{
            -\half v\T d_\Sigma
            +
            \half v\T \Sigma v
        }
        \\ &=
        \E\lr{\exp(v\T U)}
        ,
    \end{align*}
    using $\onevec\T v = 1$ and the moment generating function of the multivariate normal distribution.
\end{proof}

\begin{proof}[\proofref{lem:pseudoDeterminants}]
    With $\Sigma = P_{\onevec}(-\half\Gamma)P_{\onevec}$,
    let $X \in \Rd[d \times (d-1)]$ be such that $\Sigma = X X\T$.
    Let $v \in \Rd$ be such that $\onevec\T v = 1$,
    and denote $\delta := v - \dinvvec \perp \onevec$.
    Then $\Sigma^{v} = P_v \Sigma P_v\T = P_v X X\T P_v\T$,
    and
    \begin{align}
        \notag
        \pdet{\Sigma^{v}}
        &=
        \pdet{P_v X X\T P_v\T}
        \\ &=
        \pdet{P_v\T P_v X X\T}
        .
        \label{eq:SPPXX}
    \end{align}
    For the first two factors in \cref{eq:SPPXX}, we have
    \begin{align*}
        M^v
        &:=
        P_v\T P_v
        \\ &=
        (\Id - v \onevec\T)(\Id - \onevec v\T)
        \\ &=
        \Id - \onevec v\T - v \onevec\T + v \onevec\T \onevec v\T
        \\ &=
        \Id - \dinvvec\onevec\T + d \delta \delta\T
        ,
    \end{align*}
    which has the following eigenvectors:
    \begin{align*}
        M^v\onevec
        &=
        \onevec - \onevec + \zerovec
        = \zerovec
        , \\
        M^v\delta
        &=
        \delta - \zerovec + d \delta \delta\T \delta
        =
        (1 + d\delta\T \delta) \delta
        , \\
        M^v w
        &=
        w + \zerovec + \zerovec
        = w
        , \quad \text{for }
        w \perp \onevec, \delta
        .
    \end{align*}
    Hence, the eigenvalues of $M^v$ are $0$, $1 + d\delta\T \delta$, and $1$ with multiplicity $d-2$,
    yielding pseudo-determinant
    \begin{equation*}
        \pdet{M^v}
        =
        1 + d\delta\T \delta
        =
        dv\T v
        =
        d \norm{v}^2
        .
    \end{equation*}
    The last two factors in \cref{eq:SPPXX} give $X X\T = \Sigma$.
    Since both $M^v$ and $\Sigma$ are symmetric with identical kernel $\laspan\set{\onevec}$,
    we have $\pdet{M^v \Sigma} = \pdet{M^v} \pdet{\Sigma}$.
    This can be seen by considering a basis change with an orthogonal basis containing $\onevec$,
    which transforms both matrices into block diagonal matrices with a zero block of size $1$,
    and invertible blocks of size $d-1$, to which the multiplicativity of the regular determinant applies.

    Putting together the above considerations for $v,w$, with $\onevec\T w = \onevec\T v = 1$,
    we get
    \begin{equation*}
        \pdet{\Sigma^{v}}
        =
        d\norm{v}^2 \cdot \pdet{\Sigma}
    \end{equation*}
    and hence
    \begin{equation*}
        \frac{
            \pdet{\Sigma^{v}}
        }{
            \pdet{\Sigma^{w}}
        }
        =
        \frac{\norm{v}^2}{\norm{w}^2}
        . \qedhere
    \end{equation*}
\end{proof}

\begin{proof}[\proofref{prop:HrDensity}]
    First, we establish that $B := P_v\T (\Sigma^v)\pinv P_v$
    is a pseudoinverse of $\Sigma$
    by verifying the conditions of Lemma~S.1.4 in \cite{hentschel2023}.
    Note that
    \begin{align*}
        \onevec\T B
        &=
        \onevec\T (\Id - v\onevec\T) (\Sigma^v)\pinv (\Id - \onevec v\T)
        =
        \zerovec
        \\ \Rightarrow \quad
        \image(B)
        &\subseteq
        \onevec\orth
        =
        \image(\Sigma)
         .
    \end{align*}
    Using
    $\Sigma = P_{\onevec} \Sigma^v P_{\onevec}$
    and a series of simplifications of projection matrices,
    we have
    \begin{align*}
        \Sigma B
        &=
        P_{\onevec} \Sigma^v P_{\onevec} P_v\T (\Sigma^v)\pinv P_v
        \\ &=
        P_{\onevec} \Sigma^v( \Sigma^v)\pinv P_v
        \\ &=
        P_{\onevec} (\Id - vv\T/\norm{v}^2) P_v
        \\ &=
        P_{\onevec}
        .
    \end{align*}
    Hence,
    the conditions of the Lemma are satisfied,
    and we have $B = \Sigma\pinv = \Theta$,
    yielding
    \begin{equation*}
        \norm[(\Sigma^v)\pinv]{P_v x}^2
        =
        (P_v x)\T( \Sigma^v)\pinv (P_v x)
        =
        x\T B x
        =
        \norm[\Theta]{x}^2
        .
    \end{equation*}
    Furthermore, \cref{lem:pseudoDeterminants} yields $\pdet{\Sigma^v} = d\norm{v}^2\pdet{\Sigma}$,
    and a basic property of pseudo-determinants and pseudoinverses is
    $\pdet{\Sigma}\inv = \pdet{\Sigma\pinv} = \pdet{\Theta}$,
    implying
    \begin{equation*}
        \sqrt{\pdet{\Sigma^v}\inv}\cdot\norm{v}
        =
        \sqrt{d\inv \norm{v}^{-2} \pdet{\Sigma}\inv}\cdot\norm{v}
        =
        \sqrt{d\inv \pdet{\Theta}}
        .
    \end{equation*}
    Combining these two identities with the expression for the density of $Y^v$ in \cref{eq:density_HR_Yv},
    we get
    \begin{align*}
        f_{Y^v}(y)
        &=
        \sqrt{(2\pi)^{-(d-1)}\pdet{\Sigma^v}\inv}
        \exp\lr{
            -\half\norm[(\Sigma^v)\pinv]{P_v y-\mu_v}^2
        }
        \norm{v}
        \exp\lr{-v\T y}
        \\ &=
        \sqrt{d\inv(2\pi)^{-(d-1)}\pdet{\Theta}}
        \exp\lr{
            -\half
            \norm[\Theta]{y-\mu_v}
        }
        \exp\lr{-v\T y}
        ,
    \end{align*}
    with $\mu_v$ as defined in \cref{ex:HR}.
    Note that $\Theta\mu_v = \Theta P_{\onevec}\mu_v = \Theta\tilde\mu_v$,
    which is why $\mu_v$ can be replaced by $P_\onevec \mu_v = \tilde\mu_v$ in the expression above.
    To express $\mu_v$ and hence $\tilde\mu_v$ independently of the choice of $\Sigma$,
    observe that
    \begin{equation*}
        P_v\lr{-\half\Gamma}v
        =
        P_v\lr{
            -\half d_\Sigma \onevec\T - \half\onevec d_\Sigma\T + \Sigma
        }v
        =
        P_v\lr{
            -\half d_\Sigma + \Sigma v
        }
        =
        \mu_v
        .
    \end{equation*}

    To obtain the expression for the exponent measure density $\lambda$,
    we use \cref{cor:density_Yv_from_lambda,lemma:integralHyper} to get
    for all $y \in \Hcal^v$ that
    \begin{align*}
        \lambda(y)
        &=
        f_{Y^v}(y) \,
        \E\lr{\exp\lr{v\T U}}
        \\ &=
        \sqrt{d\inv(2\pi)^{-(d-1)}\pdet{\Theta}}
        \exp\lr{-\tfrac{1}{4}v\T\Gamma v}
        \exp\lr{
            -\half\norm[\Theta]{y-\mu_v}^2
        }
        \exp\lr{-v\T y}
        .
    \end{align*}
    Since $\lambda$ must be homogeneous,
    this expression also holds for all $y \in \Rd$.
\end{proof}

\begin{proof}[\proofref{prop:HRdensityLimit}]
    As $v_0\T\onevec=1$, we can plug in $v_0$ in the density expression in \cref{prop:HrDensity}.
    This yields
    \begin{equation*}
        \lambda(y; \Theta)
        =
        \sqrt{d\inv\lr{2\pi}^{-(d-1)}\pdet{\Theta}}
        \exp\lr{-\tfrac{1}{4}v_0\T\Gamma v_0}
        \cdot
        \exp\lr{
            - v_0\T y
        }
        \cdot
        \exp\lr{
            - \half
            \norm[\Theta]{
                y - \mu_{v_0}
            }^2
        }
        .
    \end{equation*}
    We have
    $\frac{1}{4}v_0\T\Gamma v_0=t_0^2\onevec\T\Gamma^{-1}\Gamma\Gamma^{-1}\onevec=\half t_0$
    and
    $\mu_{v_0}= P_{\onevec} (-\half \Gamma) v_0=-t_0P\Gamma\Gamma^{-1}\onevec=\zerovec$,
    which yields
    \begin{equation*}
        \lambda(y; \Theta)
        =
        \sqrt{d\inv\lr{2\pi}^{-(d-1)}\pdet{\Theta}}
        \exp\lr{-\half t_0}
        \cdot
        \exp\lr{
            - v_0\T y
        }
        \cdot
        \exp\lr{
            - \half
            \norm[\Theta]{
                y
            }^2
        }
        .
    \end{equation*}
    It follows that for $v_0\ge \zerovec$ we have $Y^{v_0}=W^{v_0}+E\onevec$ with $W^{v_0}\sim\normal(\zerovec,\Sigma^{v_0})$, where
    \begin{align*}
        \Sigma^{v_0}&=P_{v_0}\Sigma P_{v_0}\T
        \\ &=
        P_{v_0}P_{\onevec}(-\half\Gamma)P_{\onevec}\T P_{v_0}\T
        \\ &=
        P_{v_0}(-\half\Gamma) P_{v_0}\T
        \\ &=
        -\half\Gamma
        +\half \onevec v_0\T\Gamma
        +\half \Gamma v_0\onevec\T
        -\half \onevec v_0\T\Gamma v_0\onevec\T
        \\ &=
        -\half\Gamma + t_0\onevec\onevec\T + t_0\onevec\onevec\T - t_0\onevec\onevec\T
        \\ &=
        -\half\Gamma + t_0\onevec\onevec\T
        . \qedhere
    \end{align*}
\end{proof}

\begin{proof}[\proofref{cor:minIntegral}]
    Let $t_0$ and $v_0$ be the resistance radius and curvature with respect to a conditionally negative definite variogram matrix $\Gamma$.
    It follows from \citet[Proposition~6.14]{devriendt2022a} that the minimum of $\Lambda\lr{\Hcal^v}=-\frac{1}{2}v\T\Gamma v$ over all $v\in\Rd$ with $v\T\onevec=1$ is $-t_0$,
    uniquely achieved at $v = v_0$. Furthermore, because $\Gamma$ has zero diagonal and positive entries otherwise, the quadratic form $v\T\Gamma v$ is nonnegative for $v\in\Delta_{d-1}$ and only vanishes when $v$ is a canonical unit vector. This gives the second inequality.
\end{proof}

\bibliographystyle{apalikeurl}
\bibliography{literature}

\end{document}